\tikzset{every node/.style={inner sep=.5pt}}
\tikzset{every picture/.style={scale=.6}}
\DeclareMathOperator{\inedges}{in}
\DeclareMathOperator{\outedges}{out}
\DeclareMathOperator{\exe}{Exe}
\DeclareMathOperator{\exep}{\exe(\protspec)}
\DeclareMathOperator{\exepclosed}{\exe_{C}(\protspec)}
\newcommand{\abs}[1]{\left| #1\right|}
\newcommand{\edges}{\ensuremath{E}} 
\newcommand{\vertices}{\ensuremath{V}} 
\newcommand{\paths}{\mathrm{path}} 
\newcommand{\seq}{\mathrm{seq}} 
\newcommand{\maxset}{\mathrm{maxset}} 
\newcommand{\minset}{\mathrm{minset}} 
\newcommand{\set}[1]{\left\{ #1\right\}}
\newcommand{\knowledge}{\ensuremath{K}}
\newcommand{\preknowledge}{\ensuremath{K_{\prec}}}
\newcommand{\hon}{\mathrm{hon}} 
\newcommand{\Sigset}{\mathit{SigSet}} 
\newcommand{\Initset}{\mathit{InitSet}} 
\newcommand{\Endset}{\mathit{EndSet}} 
\newcommand{\evidence}{\mathit{Evidence}}
\newcommand{\dishonest}{\mathit{Dishonest}}
\newcommand{\decision}{\mathit{decision}}
\newcommand{\abort}{\text{``abort''}}
\newcommand{\promises}{\wp} 
\newcommand{\promise}[5]{\promises_{#1}(#2,#3,#4,#5)} 
\newcommand{\signature}{\mathcal{S}} 
\newcommand{\res}{\mathit{Res}}
\newcommand{\causal}{\varepsilon}
\newcommand{\state}{\mathit{s}}
\newcommand{\states}{\mathcal{S}}
\newcommand{\protspec}{\mathcal{P}}
\newcommand{\rec}{\mathit{recv}}
\newcommand{\snd}{\mathit{send}}
\newcommand{\exit}{\mathit{exit}}
\newcommand{\exitrole}{\ttp}
\newcommand{\ttpfun}{\ensuremath{\delta}}
\newcommand{\signers}{\ensuremath{A}} 
\newcommand{\signer}{P} 
\newcommand{\othersigner}{Q} 
\newcommand{\contract}{\mathfrak{C}}
\newcommand{\ttp}{\ensuremath{\mathrm{T}}} 
\newcommand{\naturals}{{\mathbb{N}}}
\newcommand{\roles}{R}
\newcommand{\mesg}{M}
\newcommand{\statessub}[1]{\states_{#1}}
\newcommand{\statesp}{\statessub{\protspec}}
\newcommand{\nlabel}{r} 
\newcommand{\elabel}{\mu}
\newcommand{\transition}[3]{#1\stackrel{#2}{\leadsto}#3}
\begin{document}
\title{Generalizing Multi-party Contract Signing\thanks{This is the
 extended version of~\cite{MR15}.}}
\author{Sjouke Mauw\inst{1} \and Sa\v{s}a Radomirovi\'{c}\inst{2}}
\institute{CSC/SnT, University of Luxembourg\\
\email{sjouke.mauw@uni.lu}
\and 
Institute of Information Security,
Dept.~of Computer Science,
ETH Zurich\\
\email{sasa.radomirovic@inf.ethz.ch}
}

\maketitle

\begin{abstract}
Multi-party contract signing (MPCS) protocols allow a group of signers
to exchange signatures on a predefined contract. 
Previous approaches considered either completely linear protocols or
fully parallel broadcasting protocols.
We introduce the new class of DAG
MPCS protocols which combines parallel and linear execution 
and allows for parallelism even within a signer role.
This generalization is useful in practical applications where the set
of signers has a hierarchical structure, such as 
chaining of service level agreements and subcontracting. 

Our novel DAG MPCS protocols are represented by directed acyclic graphs
and equipped with a labeled transition system semantics.
We define the notion of \emph{abort-chaining sequences} and prove that
a DAG MPCS protocol satisfies fairness if and only if it does not
have an abort-chaining sequence. We exhibit several examples of 
optimistic fair DAG MPCS protocols. The fairness of these
protocols follows from our theory and has additionally been verified
with our automated tool.

We define two complexity measures for DAG MPCS protocols, related
to execution time and total number of messages exchanged. We prove 
lower bounds for fair DAG MPCS protocols in terms of these
measures.

\end{abstract}

\section{Introduction}\label{sec:intro}

A multi-party contract signing (MPCS) protocol is a communication
protocol that allows a number of parties to sign a digital contract. 
The need for MPCS protocols arises, for instance, in the context of
service level agreements (SLAs) and in supply chain contracting. 
In these domains (electronic) contract negotiations and signing are
still mainly bilateral. 
Instead of
negotiating and signing one multi-party contract, in practice,
multiple bilateral negotiations are conducted in
parallel~\cite{ywk11}. Because 
negotiations can fail, parties may end up with just a subset
of the pursued bilateral contracts. If a party is missing contracts with providers or
subcontractors, it faces an 
\emph{overcommitment} problem. If contracts with customers are missing,
it has an \emph{overpurchasing} problem~\cite{kk10}. Both
problems can be prevented by using fair multi-party contract signing
protocols.

Existing optimistic MPCS protocols come in two flavors. \emph{Linear} MPCS
protocols require that at any point in time at most one signer has
enough information to proceed in his role by sending messages to other
signers. \emph{Broadcast}
MPCS protocols specify a number of
communication rounds in each of which all signers send or broadcast
messages to each other. 
However, neither of the two kinds of protocols is suitable for SLAs or
supply chain contracting.
The reason is that in both domains, the set of
contractors typically has a hierarchical structure, consisting of main
contractors and levels of subcontractors. It
is undesirable (and perhaps even infeasible) for the main contracting
partners and their subcontractors to directly communicate with
another partner's subcontractors.
This restriction immediately excludes broadcast protocols as
potential solutions and forces linear protocols to be impractically
large. 

In this paper we introduce MPCS protocol specifications that support
arbitrary combinations of linear and parallel actions, even within a
protocol role. 
The message flow of such protocols can be specified as a directed
acyclic graph (DAG) and we therefore refer to them as \emph{DAG} MPCS
protocols. 

A central requirement for MPCS protocols is \emph{fairness}. 
This 
means that either all honest
signers get all signatures on the negotiated contract or nobody gets the honest
signers' signatures. 
It is well known that in asynchronous communication networks, 
a deterministic MPCS protocol requires
a trusted third party (TTP) to achieve fairness~\cite{technion}.
In order to prevent the TTP from becoming a bottleneck, protocols have
been designed in which the TTP is only involved to resolve conflicts.
A conflict may occur if a message gets lost, if
an external adversary interferes with the protocol, or if signers do not 
behave according to the protocol specification.
If no conflicts occur, the TTP does not even have to be
aware of the execution of the protocol. Such protocols are called
\emph{optimistic}~\cite{asokanphd}.
We focus on optimistic protocols in this paper.

DAG MPCS protocols not only allow for better solutions to the
subcontracting problem, but also 
have 
further advantages over linear and
broadcast MPCS protocols and 
we design three novel MPCS protocols that demonstrate this. 
One such advantage concerns communication complexity. Linear protocols 
can reach the minimal number of messages necessary to be exchanged in
fair MPCS protocols at the cost of a high number of protocol
``rounds''. We call this the \emph{par\-al\-lel complexity}, which is a
generalization of the round complexity measure for broadcast protocols, and
define it in Section~\ref{sec:complexity}.
Conversely, broadcast protocols can attain the minimal number
of protocol 
rounds necessary for fair MPCS, but at the cost of a high message complexity. 
We demonstrate that DAG MPCS protocols can simultaneously attain best
possible order of magnitude for both complexity measures. 

As discussed in our related work section, the design of fair MPCS
protocols has proven to be non-trivial and error-prone. 
We therefore not only prove our three novel DAG MPCS protocols to be
fair, but we also derive necessary and sufficient conditions for
fairness of any optimistic DAG MPCS protocol. These conditions can be
implemented and verified automatically, but they are still 
non-trivial. Therefore, for a slightly restricted class of DAG protocols,
we additionally derive a fairness criterion that is easy to verify.

\textbf{Contributions.} 
Our main contributions are
(i) the definition of a syntax and interleaving semantics of DAG
MPCS protocols (Section~\ref{sec:spec_exec_model});
(ii) the definition of the message complexity and parallel complexity
of such protocols (Section~\ref{sec:complexity});
(iii) a method to derive a full MPCS specification from a
\emph{skeletal graph}, including the TTP logic (Section~\ref{sec:MPCS});
(iv) 
necessary and sufficient 
conditions for fairness of DAG MPCS protocols 
(Section~\ref{sec:fairness});
(v) minimal complexity bounds for DAG MPCS protocols (Section~\ref{sec:minimal_complexity});
(vi) novel fair MPCS protocols (Section~\ref{sec:constructions});
(vii) a software tool that verifies whether a given MPCS protocol is
fair{} (described in Appendix~\ref{s:tool}).
\footnote{The tool and models of our protocols 
are available at the following website:
\url{http://people.inf.ethz.ch/rsasa/mpcs}}

\section{Related Work}

We build on the body of work that has been published
in the field of fair optimistic MPCS
protocols in asynchronous networks. The first such protocols were
proposed by Baum-Waidner and Waidner~\cite{BW98}, viz.~a
round-based broadcast protocol and a related round-based linear
protocol. They 
showed subsequently~\cite{bw00} that these
protocols are round-optimal. This is a complexity measure
that is related to, but less general than, parallel complexity
defined in the present paper. 

Garay et al.~\cite{gjm99} introduced the notion of
\emph{abuse-free} contract signing. They developed the technique of
\emph{private contract signature} and used it to create
abuse-free two-party and three-party contract signing protocols. Garay
and Mac\-Kenzie~\cite{garay99} proposed MPCS protocols
which were later shown to be unfair using the model checker Mocha 
and improved by Chadha et al.~\cite{cks04}. 
Mukhamedov and Ryan~\cite{mr08} developed the notion of \emph{abort
chaining attacks} and used such attacks to show that Chadha et al.'s
improved version does not satisfy fairness in cases where there are
more than five signers. They introduced a new optimistic MPCS protocol and proved fairness for their protocol
by hand and used the NuSMV model checker to verify the case of five signers.
Zhang et al.~\cite{ZZPM12} have used 
Mocha 
to analyze the protocols of Mukhamedov and Ryan and of Mauw et
al.~\cite{MRT09}. 

Mauw et al.~\cite{MRT09} used the notion of abort chaining to
establish a lower bound on the message complexity of linear fair MPCS
protocols. This complexity measure is generalized in the present paper
to DAG MPCS protocols. 
Kordy and Radomirovi\'c~\cite{KR12} have shown 
an explicit construction for fair linear MPCS protocols. The 
construction covers in particular the protocols proposed by Mukhamedov and
Ryan~\cite{mr08} and the linear protocol of Baum-Waidner and
Waidner~\cite{BW98}, but not the broadcast protocols. The DAG MPCS
protocol model and fairness results 
developed in the present paper encompass both types of
protocols.
They allow for
arbitrary combinations of linear and parallel behaviour (i.e.\ partial
parallelism), and in addition allow for parallelism within signer
roles.
MPCS protocols combining linear and 
parallel behaviour
have not been studied yet.

Apart from new theoretical insights to be gained from designing and
studying DAG MPCS protocols, we anticipate interesting
application domains in which multiple parties establish a number of
related contracts, such as SLAs.
Emerging business models like Software as a Service 
require a negotiation to balance a customer's requirements against a service
provider's capabilities. 
The result of such a negotiation is
often complicated by the dependencies between several
contracts~\cite{lyyk12} and multi-party protocols may serve to
mitigate this problem. 
Karaenke and Kirn~\cite{kk10} propose a multi-tier negotiation
protocol to mitigate the problems of overcommitment and 
overpurchasing.
They formally verify that the protocol solves the two
observed problems, but do not consider the fairness problem.
SLAs and negotiation protocols have also been studied in the
multi-agent community. An example is the work of
Kraus~\cite{kraus2001} who defines a multi-party negotiation
protocol in which agreement is reached if all agents accept an offer.
If the offer is rejected by at least one agent, a new offer will be
negotiated.

Another interesting application area concerns \emph{supply chain
contracting}~\cite{KrWi12}. 
A supply chain consists of a series of firms involved in the
production of a product or service with potentially complex contractual relationships.
Most literature in this area focuses on economic aspects,
like pricing strategies.
An exception is the recent work of Pavlov and Katok~\cite{KaPa13} in
which fairness is studied from a game-theoretic point of view.
The study of multi-party signing protocols and multi-contract
protocols has only recently been identified as an interesting research
topic in this application area~\cite{SeZeLi12}.

\section{Preliminaries}

\subsection{Multi-party contract signing}

The purpose of a multi-party contract signing
protocol is to allow a number of parties to sign a digital contract in
a fair way. In this section we recall the basic notions
pertaining to MPCS protocols. 
We use $\signers$ to denote the set of signers involved in a protocol,
 $\contract$ to denote the contract, and 
$\ttp$ to denote the TTP.

A signer is considered \emph{honest} (cf.\
Definition~\ref{def:honestSigner}) if it faithfully executes the
protocol specification. 
An MPCS protocol is said to be \emph{optimistic} if its execution in
absence of adversarial behaviour and failures and with all honest
signers results in signed contracts for all participants 
without any involvement of $\ttp$.
Optimistic MPCS protocols consist of two subprotocols: the
\emph{main}
protocol that specifies the exchange of \emph{promises} and \emph{signatures} by the
signers, and the \emph{resolve} protocol that describes the
interaction between the agents and $\ttp$ in case of a failure in the
main protocol.
A promise made by a signer indicates the intent
to sign $\contract$. 
A promise $\promise{\signer}{m}{x}{\othersigner}{\ttp}$ can only be
 generated by signer $\signer\in\signers$. The content $(m,x)$ can be
 extracted from the promise and the promise can be verified by signer
 $\othersigner\in\signers$ and by $\ttp$. 
A signature $\signature_\signer(m)$ can only be generated
 by $\signer$ and by $\ttp$, if $\ttp$ has a promise
 $\promise{\signer}{m}{x}{\othersigner}{\ttp}$.
The content $m$ can be extracted and the signature can be verified by anybody.
Cryptographic schemes that allow for the above properties are digital
signature schemes and private contract signatures~\cite{gjm99}.

MPCS protocols must satisfy at least two security requirements, namely
\emph{fairness} and \emph{timeliness}.
An optimistic MPCS protocol for contract $\contract$ 
is said to be \emph{fair} 
for an honest signer $\signer$ 
if whenever some signer $\othersigner\not=\signer$ obtains a 
signature on $\contract$ 
from $\signer$, then $\signer$ can obtain a signature on $\contract$
from all signers participating in the protocol.
An optimistic MPCS protocol is said to satisfy 
\emph{timeliness}, if each signer has a recourse to stop endless 
waiting for expected messages.
The fairness requirement will be the guiding principle for our
investigations and timeliness will be implied by the communication
model together with the behaviour of the TTP. 
A formal definition of fairness is given in 
Section~\ref{sec:fairness}.

A further desirable property for MPCS protocols is abuse-freeness
which was introduced in~\cite{gjm99}. 
An optimistic MPCS protocol is said to be \emph{abuse-free},
if it is impossible
for any set of signers at any point in the protocol to be able 
to prove to an outside party that they have the power to 
terminate or successfully complete the contract signing. 
Abuse-freeness is outside the scope of this paper.

\subsection{Graphs}

Let $G = (\vertices,\edges)$ with
$\edges\subseteq\vertices\times\vertices$ be a directed acyclic graph.
Let $v,w\in \vertices$ be vertices. We say that $v$ \emph{causally
 precedes} $w$, denoted $v\prec w$, if there
is a directed path from $v$ to $w$ in the graph. We write $v\preceq w$
for $v\prec w \vee v=w$. 
We extend \emph{causal precedence} to the set $\vertices\cup\edges$ as follows.
Given two edges $(v,w), (v',w')\in\edges$, we say that 
$(v,w)$ \emph{causally precedes} $(v',w')$ and write $(v,w)\prec (v',w')$,
if $w\preceq v'$. Similarly, we write $v\prec (v',w')$ if $v\preceq v'$
and $(v,w)\prec v'$ if $w\preceq v'$. 
Let $x,y\in\vertices\cup\edges$.
If $x$ causally precedes $y$ we 
also say that $y$ \emph{causally follows} $x$. 
We say that a set $S\subseteq\vertices\cup\edges$ is
\emph{causally closed} if it contains all causally preceding vertices
and edges of its elements, i.e., 
$\forall x\in S, y\in\vertices\cup\edges: y\prec x\implies
y\in S$.

By $\inedges(v)\subseteq \edges$ we denote 
the set of edges incoming to $v$ and by $\outedges(v)\subseteq \edges$
the set of edges outgoing from $v$. Formally, we have 
$\inedges(v)=\set{(w,v)\in\edges\mid w\in\vertices}$
and 
$\outedges(v)=\set{(v,w)\in\edges\mid w\in\vertices}$.

\subsection{Assumptions}\label{sec:assumptions}

The communication between signers is asynchronous and messages can get lost or be
delayed arbitrary long. 
The communication channels between signers and the TTP $\ttp$ are 
assumed to be \emph{resilient}. This means that the messages sent
over these channels are guaranteed to be delivered eventually. 
In order to simplify our reasoning, 
we assume that the channels between protocol participants
are confidential and authentic. 
We consider the problem of delivering confidential and
authentic messages in a Dolev-Yao intruder model 
to be orthogonal to the present problem setting. 

We assume that $\contract$ contains
the contract text along with fresh values
(contributed by every signer) which
prevent different protocol executions from generating 
interchangeable protocol messages.
Furthermore we assume that $\contract$ contains all information
that $\ttp$ needs in order
to 
reach a decision regarding the contract 
in case it is contacted by a signer. This information contains the protocol
specification, an identifier for $\ttp$, identifiers for the signers
involved in the protocol, and the 
assignment of signers to protocol roles in the protocol specification.

We assume the existence of a designated resolution process per signer
which coordinates the various resolution requests of the signer's
parallel threads. It will ensure that $\ttp$ is contacted at most
once by the signer.
After having received the first request from one of the signer's
threads, this resolution process will contact $\ttp$ on behalf of the
signer and store $\ttp$'s reply. This reply will be forwarded
to all of the signer's threads whenever they request resolution.

\section{DAG Protocols} \label{sec:parallel_protocols}

Our DAG protocol model is a multi-party protocol model in an
asynchronous network with a TTP and an
adversary that controls a subset of parties. 
\subsection{Specification and Execution Model}\label{sec:spec_exec_model}

A \emph{DAG protocol specification} (or simply, a \emph{protocol
specification}) is a directed acyclic graph in which the vertices
represent the state of a signer and the edges represent either a causal
dependency between two states (an $\causal$-edge) or the sending
of a message.
A vertex' outgoing edges can be executed in parallel.
Edges labeled with $\exit$ denote that a signer contacts
$\exitrole$. 

\begin{definition}
\label{def:parallel_prot_spec}
Let $\roles$ be a set of roles 
such that $\exitrole\not\in\roles$ and 
$\mesg$ a set of messages. Let
$\causal$ and $\exit$ be two symbols, such that 
$\causal,\exit\notin\mesg$. 
By $\mesg_{\causal}^{\exit}$ and $\roles_{\exitrole}$ we denote the sets 
$\mesg_{\causal}^{\exit}=\mesg\cup\{\causal, \exit\}$
 and
$\roles_{\exitrole}=\roles\cup\set{\exitrole}$, respectively.
A \emph{DAG protocol specification} 
is a labeled directed acyclic graph
$\protspec=(\vertices,\edges,\nlabel,\elabel,\ttpfun)$, where
\begin{enumerate}
 \setlength{\itemsep}{1pt}
 \setlength{\parskip}{0pt}
 \setlength{\parsep}{0pt}
\item $(\vertices,\edges)$ is a directed acyclic graph;
\item $\nlabel\colon\vertices\to\roles_{\exitrole}$ is a labeling function 
assigning roles 
to vertices;
\item $\elabel\colon\edges\to\mesg_{\causal}^{\exit}$ is an edge-labeling 
function that satisfies
\begin{enumerate}
\item $\forall (v,v')\in\edges\colon \elabel(v,v')=\causal\implies \nlabel(v) 
= 
\nlabel(v')$,
\item $\forall (v,v')\in\edges\colon \elabel(v,v')=\exit\implies \nlabel(v') 
= 
\exitrole$;
\end{enumerate}
\item $\ttpfun \colon \mesg^*\to M$ is a function associated with
$\exit$-labeled edges.
\end{enumerate}
\end{definition}
A message edge $(v,v')$ specifies that $\elabel(v,v')=m$
is to be sent from role $r(v)$ to role $r(v')$.
An $\causal$-edge $(v,v')$ represents internal progress of role $\nlabel(v
)=\nlabel(v')$ and allows to specify a causal order in the role's events.
An exit edge denotes that a role can contact the TTP. The TTP then
uses the function $\ttpfun$ to determine a reply to the requesting
role, based on the sequence of messages that it has received. In
Section~\ref{sec:MPCS} exit messages and the $\ttpfun$ function are
used to model the resolve protocol of the TTP.

\begin{figure}
\centering
\includegraphics[scale=.4]{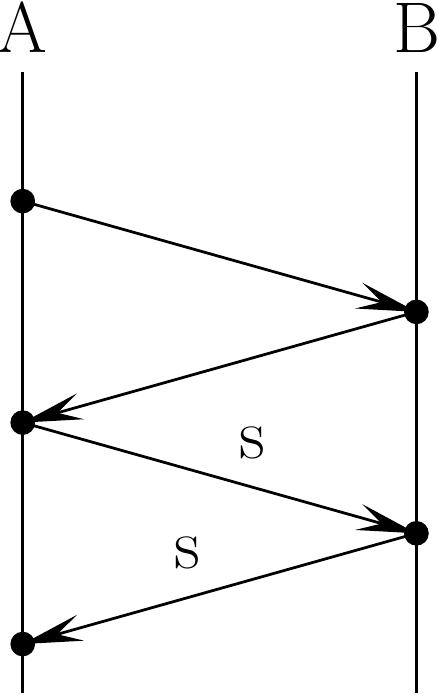}\qquad\includegraphics[scale=.4]{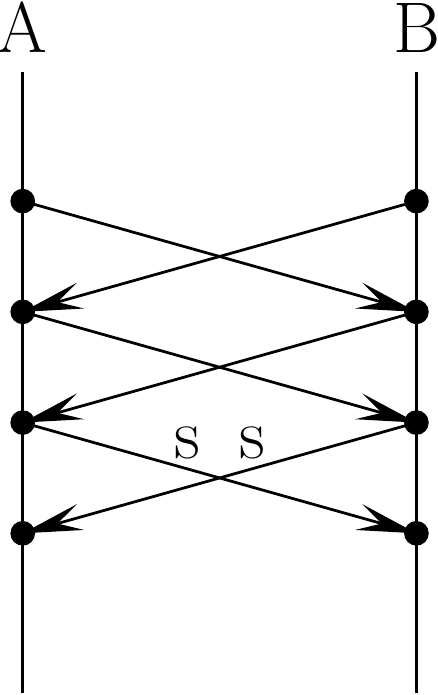}\qquad\includegraphics[scale=.4]{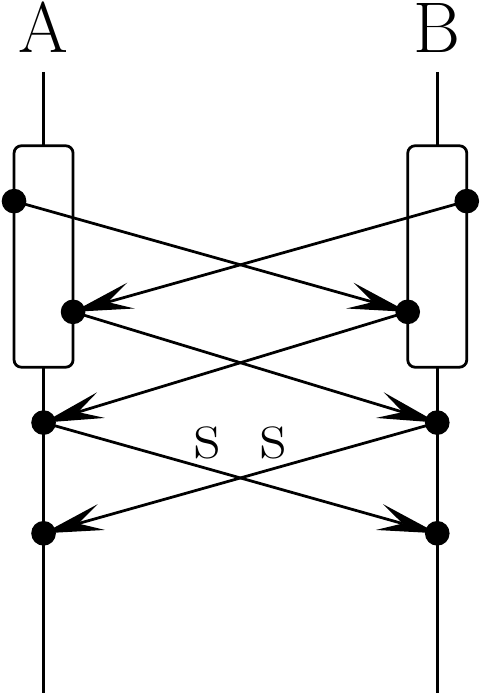}
\caption{Linear, broadcast, and the novel DAG MPCS protocols.}\label{fig:examples}
\end{figure}

We give three examples of DAG protocols in
Figure~\ref{fig:examples}, represented as Message Sequence Charts
(MSCs).
The dots denote the vertices, which we group vertically below their
corresponding role names. Vertical lines in the MSCs correspond to
$\causal$-edges and horizontal or diagonal edges represent message
edges. We mark edges labeled with signing messages with an ``s''
and we leave out the edge labels of promise messages.
We do not display exit edges, 
they are implied by the MPCS protocol
specification. A box represents the splitting of a role into two
parallel threads, which join again at the end of the box.
We revert to a traditional representation of
labeled DAGs
if it is more convenient (see, e.g., Figure~\ref{fig:exampleDAG}).

The first protocol in Figure~\ref{fig:examples} is a classical linear
2-party contract signing
protocol. It consists of one round of promises followed by a round
of exchanging signatures. The second protocol is the classical
broadcast protocol for two signers. It consists of two rounds of
promises, followed by one round of 
signatures.
The third protocol is a novel DAG protocol, showing the power of
in-role parallelism. It is derived from the broadcasting protocol by
observing that its fairness does not depend on the causal order of the
first two vertices of each of the roles.

Let $\protspec=(\vertices,\edges,\nlabel,\elabel,\ttpfun)$ be a protocol specification. 
The \emph{restriction} of $\protspec$ to role $\signer$,
denoted by $\protspec_\signer$, is the protocol specification
$(\vertices_\signer,\edges_\signer,\nlabel_\signer,\elabel_\signer,\ttpfun_\signer)$, where 
\begin{eqnarray*}
\hspace{-4ex}&&\edges_\signer = \set{(v,v')\in\edges\mid
 \nlabel(v)=\signer \vee \nlabel(v')=\signer},\quad \vertices_\signer = \set{v,v'\in\vertices\mid (v,v')\in\edges_\signer},\\
\hspace{-4ex}&&\nlabel_\signer(v) = \nlabel(v)\text{ for }v\in
\vertices_\signer,\quad \elabel_\signer(e) = \elabel(e)\text{ for
}e\in\edges_\signer,\quad \textrm{and}\quad \ttpfun_\signer = \ttpfun.
\end{eqnarray*}

The execution state of a protocol consists of the set of events,
connected to vertices or edges, that have been executed.
\begin{definition}
Let $\protspec=(\vertices,\edges,\nlabel,\elabel,\ttpfun)$ be a
protocol specification.
A \emph{state} of $\protspec$ is a set $\state\subseteq \vertices\cup\edges$.
The set of states of $\protspec$ is denoted by $\statesp$.
The \emph{initial state} of $\protspec$ is defined as
$s_0 = \emptyset$.
\end{definition}

In order to give DAG protocols a semantics,
we first define the \emph{transition relation} between 
states of a protocol.
\begin{definition}
\label{def:transition}
Let $\protspec=(\vertices,\edges,\nlabel,\elabel,\ttpfun)$ 
be a protocol specification, $L=\{\varepsilon,\snd,$ $\rec, \exit\}$ 
the set of transition labels, and
$\state,\state'\in \statesp$ the states of $\protspec$. 
We say that $\protspec$ transitions with label $\alpha$ from state
$\state$ into $\state'$, 
denoted by $\transition{\state}{\alpha}{\state'}$, iff 
$\state\not=\state'$ and one of the following conditions holds
\begin{enumerate}
 \setlength{\itemsep}{1pt}
 \setlength{\parskip}{0pt}
 \setlength{\parsep}{0pt}
\item $\alpha=\rec$ and $\exists v\in\vertices$, such that 
$\inedges(v)\subseteq \state$ and $\state'=\state\cup\{v\}$,
\item $\alpha=\snd$ and $\exists m\in \mesg, e\in\edges$, 
such that 
$\elabel(e)=m$, and
$\state'=\state\cup \{e\}$,
\item $\alpha=\causal$ and $\exists e=(v,v')\in \edges$, such that 
$\elabel(e)=\causal$, 
$v\in\state$ and $\state'=\state\cup \{e\}$,
\item $\alpha=\exit$ and $\exists e\in\edges$, such that 
$\elabel(e)=\exit$ and $\state'=\state\cup \{e\}$.
\end{enumerate}
\end{definition}

In Definition~\ref{def:transition}, receive events are represented by
vertices, all other events by edges. By the first condition, 
a receive event can only occur
if all events assigned to the incoming edges have occurred.
In contrast, the sending of messages (second condition) 
can take place at any time.
The third condition states that an $\causal$-edge can be executed if
the event on which it causally depends has been executed.
Finally, like send events, an exit event can occur at any
time.
Every event may occur at most once, however. This is ensured by the
condition $s'\neq s$.

The transitions model all possible behavior of the system. The
behavior of honest agents in the system will be restricted as detailed
in the following subsection.
We denote sequences by $[a_0,a_1,\ldots,a_l]$ 
and the concatenation of two sequences $\sigma_1$, $\sigma_2$ 
by $\sigma_1\cdot \sigma_2$.

\begin{definition}
\label{def:prot_sem}
Let $\protspec=(\vertices,\edges,\nlabel,\elabel,\ttpfun)$ be a protocol specification
and $L=\{\varepsilon, \snd,$ $\rec, \exit\}$ a set of labels.
The \emph{semantics} of $\protspec$ is the labeled transition system 
$(\statesp,L, {\leadsto,} \state_0)$,
which is a graph consisting of vertices $\statesp$ and edges
$\leadsto$ with start state $\state_0$.
An \emph{execution} of $\protspec$
is a finite sequence 
$\rho=[\state_0,\alpha_1,\state_1,\dots,\alpha_l,\state_l]$,
$l\geq 0$, such that
$\forall i\in\{0,\dots, l-1 \}
\colon
\transition{\state_i}{\alpha_{i+1}}{\state_{i+1}}$. 
The set of all executions of $\protspec$ is denoted by 
$\exep$.
\end{definition} 
If $\rho = [\state_0,\alpha_1,\state_1,\dots,\alpha_l,\state_l]$ is an
execution of $\protspec$ and $\protspec_\signer$ is the restriction to role
$\signer$, 
then the \emph{restricted} execution
$\rho_\signer$ is obtained inductively as follows.
\begin{enumerate}
\item $[\state]_\signer = [\state\cap (V_\signer\cup E_\signer)]$ for a state $\state$.

\item $([\state,\alpha,\state']\cdot \sigma)_\signer = \begin{cases}
[\state]_\signer\cdot \sigma_\signer & \text{ if
} [\state]_\signer = [\state']_\signer\\
[\state]_\signer\cdot[\alpha]\cdot ([\state']\cdot \sigma)_\signer & \text{ else.}
\end{cases}$
\end{enumerate}
Commutativity of restriction and execution is asserted by the
following lemma.

\begin{lemma}
Let $\protspec$ be a protocol specification and $\protspec_\signer$ the
restriction to role $\signer$. Then every restricted execution $\rho_\signer$
is an execution of $\protspec_\signer$.
\end{lemma}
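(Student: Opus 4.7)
The plan is to prove this by induction on the length $l$ of the execution $\rho=[\state_0,\alpha_1,\state_1,\ldots,\alpha_l,\state_l]$. For $l=0$ the restriction yields $[\emptyset]$, which is the initial state and hence a (length-zero) execution of $\protspec_\signer$. For the inductive step, write $\rho=\rho'\cdot[\alpha_l,\state_l]$ and invoke the induction hypothesis on $\rho'$. A short secondary induction on the recursive definition of the restriction shows that $\rho'_\signer$ ends in the state $[\state_{l-1}]_\signer$; consequently, unfolding the recursion once more, $\rho_\signer$ equals either $\rho'_\signer$ (when $[\state_{l-1}]_\signer=[\state_l]_\signer$) or $\rho'_\signer\cdot[\alpha_l,[\state_l]_\signer]$. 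The first case is closed immediately by the induction hypothesis, so it remains to verify that in the second case $\transition{[\state_{l-1}]_\signer}{\alpha_l}{[\state_l]_\signer}$ is a legal transition of $\protspec_\signer$ in the sense of Definition~\ref{def:transition}.

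Because $\state_l=\state_{l-1}\cup\{x\}$ for some $x\in\vertices\cup\edges$ and $[\state_{l-1}]_\signer\neq[\state_l]_\signer$, we must have $x\in\vertices_\signer\cup\edges_\signer$. I then case-split on $\alpha_l$. For $\alpha_l=\rec$ we have $x=v$ with $\inedges(v)\subseteq\state_{l-1}$; since the set of edges incoming to $v$ in $\protspec_\signer$ is exactly $\inedges(v)\cap\edges_\signer$, intersecting both sides with $\edges_\signer$ yields the required inclusion in $[\state_{l-1}]_\signer$. For $\alpha_l\in\{\snd,\exit\}$ the transition imposes no precondition on $\state_{l-1}$, so $x\in\edges_\signer$ is sufficient. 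For $\alpha_l=\causal$, with $x=(v,v')$, condition~3(a) of Definition~\ref{def:parallel_prot_spec} forces $\nlabel(v)=\nlabel(v')$, hence $x\in\edges_\signer$ implies that both endpoints lie in $\vertices_\signer$; combined with $v\in\state_{l-1}$ this gives $v\in[\state_{l-1}]_\signer$, validating clause~3 of the transition relation.

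The main potential obstacle is the bookkeeping around the recursive definition of $\rho_\signer$: one has to ensure that consecutive equal restricted states really are collapsed (so no spurious duplicate state appears that would block concatenation with the inductive prefix) and that the action labels of non-collapsed transitions propagate unchanged. Once the auxiliary claim ``$\rho'_\signer$ ends in $[\state_{l-1}]_\signer$'' is established, the four transition cases above are routine: each reduces the preconditions in $\protspec_\signer$ to the corresponding preconditions already satisfied in $\protspec$ via intersection with $\vertices_\signer\cup\edges_\signer$.
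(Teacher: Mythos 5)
The paper states this lemma without providing a proof (it does not appear in the appendix of proofs), so there is nothing to compare your argument against; evaluated on its own, your proof is correct and is the routine argument the authors evidently had in mind. The two points that need care are exactly the ones you identify: the bookkeeping lemma that the front-to-back recursion defining $\rho_\signer$ commutes with appending a transition at the end (so that $\rho_\signer$ is either $\rho'_\signer$ or $\rho'_\signer\cdot[\alpha_l,[\state_l]_\signer]$), and the verification that each of the four transition preconditions relativizes to $\protspec_\signer$ by intersecting with $\vertices_\signer\cup\edges_\signer$; both are handled correctly. (Minor remark: in the $\causal$ case the appeal to condition~3(a) is not even needed, since $\vertices_\signer$ is defined to contain both endpoints of every edge of $\edges_\signer$, so $v\in\vertices_\signer$ follows directly from $(v,v')\in\edges_\signer$.)
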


\subsection{Adversary Model}

An honest agent executes the protocol
specification faithfully. The following definition specifies what this
entails for a DAG protocol:
the agent waits for the reception of all causally preceding messages before
sending causally following messages, does not execute an $\exit$ edge
attached to a vertex $v$ if all messages at $v$ have been received and never
executes more than one
$\exit$ edge (which in the context of MPCS protocols corresponds to 
contacting the TTP at most once), 
and does not send any
messages which causally follow a vertex from which the $\exit$ edge was executed. 

\begin{definition}\label{def:honestSigner}
Let $\protspec$ be a DAG protocol specification. 
An agent $\signer$ is \emph{honest} in an execution $\rho$ of
$\protspec$, if all states $\state$ of the restricted 
execution $\rho_\signer$ satisfy the following conditions:
\begin{enumerate}
 \setlength{\itemsep}{1pt}
 \setlength{\parskip}{0pt}
 \setlength{\parsep}{0pt}
\item $\state$ contains at most one exit edge.
\item If $\state$ contains no exit edge, then $\state$ is causally closed.
\item If $\state$ contains an exit edge $e=(v,w)$, $\elabel(e)=\exit$, 
then $v\not\in\state$ and
 $\state\setminus\set{e}$ is causally closed.
\end{enumerate}
\end{definition}

A dishonest agent is only limited by the execution model. 
Thus 
a dishonest agent can send its messages at any time
and in any order, regardless of the causal precedence given in the
protocol specification. A dishonest agent can execute multiple $\exit$
edges and may send and receive messages causally following an exit edge.
Dishonest agents are controlled by a single adversary, their knowledge
is shared with the adversary. The adversary can delay or block messages
sent from one agent to another, but the adversary cannot prevent
messages between agents and the TTP from being delivered eventually.
All communication channels are authentic and confidential. 

\subsection{Communication Complexity}
\label{sec:complexity}

To define measures for expressing the communication 
complexity of DAG protocols, we introduce the notion of 
\emph{closed executions}.
A closed execution is a complete execution of the protocol by honest
agents. 

\begin{definition}
\label{def:exe-closed}
Let $\protspec=(\vertices,\edges,\nlabel,\elabel,\ttpfun)$ 
be a protocol specification and 
$(\statesp,L, {\leadsto,} \state_0)$ be the semantics for $\protspec$.
Given $\rho=[\state_0,\alpha_1,\state_1,\dots,\alpha_l,\state_l]\in \exep$, 
we say that 
$\rho$ is \emph{closed} if the following three conditions are satisfied
\begin{enumerate}
 \setlength{\itemsep}{1pt}
 \setlength{\parskip}{0pt}
 \setlength{\parsep}{0pt}
\item $\state_i$ is causally closed, for every $0\leq i\leq l$,
\item $\alpha_i\not=\exit$, for every $1\leq i\leq l$,
\item $\nexists \alpha\in L\setminus\set{\exit},
\state\in\statesp \colon \rho\cdot[\alpha,\state]\in\exep$.
\end{enumerate}
The set of all closed executions of $\protspec$ is denoted by
$\exepclosed$. 
\end{definition}

Let $\rho=[\state_0,\alpha_1,\state_1,\dots,\alpha_l,\state_l]$ be an execution 
of a protocol $\protspec$. 
By $|\rho|_{\snd}$ we denote the number of labels 
$\alpha_i$, for $1\leq i\leq l$,
such that $\alpha_i=\snd$. 

\begin{lemma}
\label{lem:snd-count}
For any two closed executions $\rho$ and $\rho'$ of a protocol 
$\protspec$ we have $|\rho|_{\snd}=|\rho'|_{\snd}$.
\end{lemma}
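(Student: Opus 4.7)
The plan is to prove that $|\rho|_{\snd}$ is completely determined by $\protspec$, by showing it equals the total number of message edges in $\protspec$. This reduces the lemma to a straightforward counting argument in two short steps.

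First I would observe, from the case analysis in Definition~\ref{def:transition}, that only a $\snd$-transition introduces a message edge into the state: the $\rec$ transition adds a vertex, while the $\causal$ and $\exit$ transitions add edges labelled $\causal$ and $\exit$, respectively. Moreover, each transition has the form $\state' = \state \cup \{x\}$, and the condition $\state \neq \state'$ forces the added element $x$ to be new. Hence each $\snd$-transition contributes a distinct, previously absent message edge, so if $\state_l$ denotes the final state of $\rho$ then
\[
|\rho|_{\snd} \;=\; |\{e \in \state_l : \elabel(e) \in \mesg\}|.
\]

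Next I would show that $\state_l$ contains \emph{every} message edge of $\protspec$. Suppose, toward contradiction, that some $e \in \edges$ with $\elabel(e) \in \mesg$ satisfies $e \notin \state_l$. Then clause~(2) of Definition~\ref{def:transition} yields the valid transition $\transition{\state_l}{\snd}{\state_l \cup \{e\}}$, so $\rho \cdot [\snd, \state_l \cup \{e\}] \in \exep$. This contradicts clause~(3) of Definition~\ref{def:exe-closed}, since $\snd \in L \setminus \{\exit\}$. Consequently $\{e \in \state_l : \elabel(e) \in \mesg\} = \{e \in \edges : \elabel(e) \in \mesg\}$, a set depending only on $\protspec$.

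Combining the two observations gives $|\rho|_{\snd} = |\{e \in \edges : \elabel(e) \in \mesg\}|$ for every closed execution $\rho$, and the lemma follows. I expect no substantial obstacle: the one subtlety worth flagging is that clause~(3) of Definition~\ref{def:exe-closed} forbids any valid extension, not merely extensions preserving causal closure. Since the $\snd$-clause of Definition~\ref{def:transition} imposes no precondition on $\state$ beyond $\state \neq \state'$, the maximality argument above applies without any need to verify that $\state_l \cup \{e\}$ is itself causally closed.
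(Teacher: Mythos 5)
Your proof is correct and follows essentially the same route as the paper's: each $\snd$-transition adds a distinct new message edge (forced by $\state\neq\state'$), and closedness condition~(3) forces every message edge to have been added, so $|\rho|_{\snd}$ equals the number of message edges of $\protspec$. The subtlety you flag about condition~(3) not requiring causal closure of the extension is a fair point, though it is immaterial here since the extension argument goes through regardless.
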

The proof is given in the appendix.
The first measure expressing the complexity of a protocol $\protspec$
is called \emph{message complexity}.
It counts the overall number of messages that have been sent 
in a closed execution of a protocol $\protspec$.
\begin{definition}
Let $\protspec$ be a protocol specification and let
$\rho\in\exepclosed$.
The \emph{message complexity} of $\protspec$, denoted by 
$MC_\protspec$, is defined as
$MC_\protspec = |\rho|_{\snd}$.
\end{definition}
Lemma~\ref{lem:snd-count} guarantees that the message complexity of 
a protocol is well defined.

The second complexity measure 
is called \emph{parallel complexity}. 
It represents the 
minimal time of a closed execution 
assuming that all events which can be executed in parallel 
are executed in parallel.
The parallel complexity of a protocol is defined as the length of a
maximal chain of causally related send edges.

\begin{definition}
The \emph{parallel complexity} of a protocol $\protspec$, 
denoted by $PC_\protspec$, 
is defined as 
$$
PC_\protspec=\max_{n \in \naturals}\,
\exists_{[e_{1},e_{2},\ldots,e_{n}] \in \edges^{*}} :\\
\forall_{1\leq i \leq n}: \elabel(e_{i}) = \snd \wedge
\forall_{1\leq i < n}: e_{i}\prec e_{i+1}\text{.}
$$
\end{definition}

\begin{example}
The message complexity of the first protocol in
Figure~\ref{fig:examples} is 4, which is known to be optimal for two
signers~\cite{schunterphd}. Its parallel complexity is 4, too. 
The message complexity of the other two protocols is 6, but their
parallel complexity is 3, which is optimal 
for broadcasting protocols with two signers~\cite{bw00}.
\end{example}

\section{DAG MPCS protocols}\label{sec:MPCS}

We now define a class of optimistic MPCS protocols in the DAG protocol model.

\subsection{Main Protocol}\label{sec:mainprotocol}

The key requirements we want our DAG MPCS protocol specification
to satisfy, stated formally in Definition~\ref{def:parallelMPCSspec}, are as follows.
The messages exchanged between signers in the protocol are of two types,
promises, denoted by $\promises()$, and signatures, denoted by $\signature()$. 
Every promise contains information about 
the vertex from which it is sent. This is done by concatenating
the contract $\contract$ with the vertex $v$ the promise originates from and is denoted by $(\contract,v)$.
The signers can contact the TTP at any time.
This is modeled with exit edges:
Every vertex $v\in\vertices$ such that $\nlabel(v)\in\signers$ 
(the set of all signers) is adjacent 
to a unique vertex $v_\ttp\in\vertices$, $\nlabel(v_\ttp)=\exitrole$. 
The communication with $\ttp$ is represented by $\ttpfun$.
The set of vertices with outgoing signature messages is denoted by
$\Sigset$.

\begin{definition}\label{def:parallelMPCSspec}
Let $\protspec=(\vertices,\edges,\nlabel,\elabel,\ttpfun)$ be a protocol
specification, $\signers\subset\roles$ be a finite set of signers, 
$\contract$ be a contract, and $\Sigset\subseteq\vertices$.
$\protspec$ is called a \emph{DAG MPCS protocol specification for
 $\contract$,} if\ \footnote{We write $\exists!$ for unique existential
 quantification.}
\begin{enumerate}
\item\label{cond:TTP} $\exists!\ v_\ttp\in\vertices: \nlabel(v_\ttp)=\exitrole\wedge
\forall v\in\vertices\setminus\set{v_\ttp}: (v,v_\ttp)\in\edges$,
\item\label{cond:transitivity} $\forall v,w\in\vertices: v\prec w\Rightarrow (v,w)\in\edges\vee
 \exists u\in\vertices: v\prec u\prec w\wedge \nlabel(u)\in\set{\nlabel(v),\nlabel(w)}$,

\item\label{cond:elabel} $\forall (v,w)\in\edges: \elabel(v,w) = $
$$\begin{cases}
\causal\text{, \quad if } \nlabel(v)=\nlabel(w)\text,\\
\exit\text{, \quad if } w=v_\ttp\text,\\
\signature_{\nlabel(v)}(\contract)\text{, \quad if }
v\in\Sigset\wedge\nlabel(v)\neq \nlabel(w)\neq\exitrole\text,\\
\promise{\nlabel(v)}{\contract}{v}{\nlabel(w)}{\ttp}\text{, \quad else.}
\end{cases}
$$
\item $\ttpfun:
 \mesg^*\to\set{\abort,\displaystyle\left(\signature_\signer(\contract)\right)_{\signer\in\signers}}$,
 where $(\signature_\signer(\contract))_{\signer\in\signers}$ denotes a list of
 signatures on $\contract$, one by each signer.
\end{enumerate}

We write $\Sigset(\protspec)$ for the largest subset of $\Sigset$ which satisfies
$$v\in\Sigset(\protspec)\Rightarrow \exists w\in\vertices:
 (v,w)\in\edges, \elabel(v,w)\in\mesg.$$
The set $\Sigset(\protspec)$ is called the \emph{signing set}.
\end{definition}

We represent DAG MPCS protocols as \emph{skeletal} graphs as shown in
Figure~\ref{fig:skeletal}. 
The full graph, shown in Figure~\ref{fig:full}, is obtained from the
skeletal graph by adding all edges required by 
condition~\ref{cond:transitivity} of
Definition~\ref{def:parallelMPCSspec} and extending $\mu$ according to
condition~\ref{cond:elabel}. 
The $\causal$ edges are dashed in the graphs. 
The shaded vertices in the graphs indicate the vertices that are
in the signing set. 
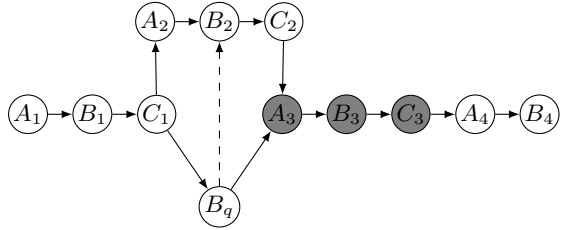
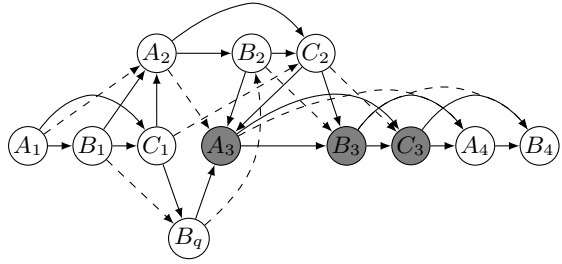
\begin{figure}
 \centering
\subfloat[\label{fig:skeletal}Skeletal graph.]{\begin{tikzpicture}[>=latex,line join=bevel]
\begin{scope}
  \pgfsetstrokecolor{black}
  \definecolor{strokecol}{rgb}{1.0,1.0,1.0};
  \pgfsetstrokecolor{strokecol}
  \definecolor{fillcol}{rgb}{1.0,1.0,1.0};
  \pgfsetfillcolor{fillcol}
  \filldraw (0bp,-1bp) -- (0bp,138bp) -- (341bp,138bp) -- (341bp,-1bp) -- cycle;
\end{scope}
\begin{scope}
  \pgfsetstrokecolor{black}
  \definecolor{strokecol}{rgb}{1.0,1.0,1.0};
  \pgfsetstrokecolor{strokecol}
  \definecolor{fillcol}{rgb}{1.0,1.0,1.0};
  \pgfsetfillcolor{fillcol}
  \filldraw (0bp,-1bp) -- (0bp,138bp) -- (341bp,138bp) -- (341bp,-1bp) -- cycle;
\end{scope}
  \node (A_4) at (289bp,69bp) [draw,circle] {$A_4$};
  \node (A_3) at (169bp,69bp) [draw,fill=gray,circle] {$A_3$};
  \node (A_2) at (90bp,127bp) [draw,circle] {$A_2$};
  \node (A_1) at (11bp,69bp) [draw,circle] {$A_1$};
  \node (C_1) at (91bp,69bp) [draw,circle] {$C_1$};
  \node (C_3) at (249bp,69bp) [draw,fill=gray,circle] {$C_3$};
  \node (C_2) at (170bp,127bp) [draw,circle] {$C_2$};
  \node (B_q) at (130bp,11bp) [draw,circle] {$B_q$};
  \node (B_1) at (51bp,69bp) [draw,circle] {$B_1$};
  \node (B_2) at (130bp,127bp) [draw,circle] {$B_2$};
  \node (B_3) at (209bp,69bp) [draw,fill=gray,circle] {$B_3$};
  \node (B_4) at (329bp,69bp) [draw,circle] {$B_4$};
  \draw [->] (B_1) ..controls (67.177bp,69bp) and (72.105bp,69bp)  .. (C_1);
  \draw [->] (C_3) ..controls (265.18bp,69bp) and (270.1bp,69bp)  .. (A_4);
  \draw [->] (C_1) ..controls (90.65bp,89.6bp) and (90.405bp,103.31bp)  .. (A_2);
  \draw [->] (B_q) ..controls (142.76bp,30.322bp) and (154.02bp,46.485bp)  .. (A_3);
  \draw [->] (B_3) ..controls (225.18bp,69bp) and (230.1bp,69bp)  .. (C_3);
  \draw [->] (C_2) ..controls (169.66bp,106.79bp) and (169.41bp,93.089bp)  .. (A_3);
  \draw [->] (A_3) ..controls (185.18bp,69bp) and (190.1bp,69bp)  .. (B_3);
  \draw [->,dashed] (B_q) ..controls (130bp,43.114bp) and (130bp,90.415bp)  .. (B_2);
  \draw [->] (A_1) ..controls (27.177bp,69bp) and (32.105bp,69bp)  .. (B_1);
  \draw [->] (A_2) ..controls (106.18bp,127bp) and (111.1bp,127bp)  .. (B_2);
  \draw [->] (C_1) ..controls (103.95bp,49.412bp) and (115.08bp,33.429bp)  .. (B_q);
  \draw [->] (A_4) ..controls (305.18bp,69bp) and (310.1bp,69bp)  .. (B_4);
  \draw [->] (B_2) ..controls (146.18bp,127bp) and (151.1bp,127bp)  .. (C_2);
\end{tikzpicture}

}\\
\subfloat[\label{fig:full}Full graph.]{\begin{tikzpicture}[>=latex,line join=bevel]
\begin{scope}
  \pgfsetstrokecolor{black}
  \definecolor{strokecol}{rgb}{1.0,1.0,1.0};
  \pgfsetstrokecolor{strokecol}
  \definecolor{fillcol}{rgb}{1.0,1.0,1.0};
  \pgfsetfillcolor{fillcol}
  \filldraw (0bp,-1bp) -- (0bp,159bp) -- (341bp,159bp) -- (341bp,-1bp) -- cycle;
\end{scope}
\begin{scope}
  \pgfsetstrokecolor{black}
  \definecolor{strokecol}{rgb}{1.0,1.0,1.0};
  \pgfsetstrokecolor{strokecol}
  \definecolor{fillcol}{rgb}{1.0,1.0,1.0};
  \pgfsetfillcolor{fillcol}
  \filldraw (0bp,-1bp) -- (0bp,159bp) -- (341bp,159bp) -- (341bp,-1bp) -- cycle;
\end{scope}
  \node (A_4) at (289bp,69bp) [draw,circle] {$A_4$};
  \node (A_3) at (131bp,69bp) [draw,fill=gray,circle] {$A_3$};
  \node (A_2) at (91bp,127bp) [draw,circle] {$A_2$};
  \node (A_1) at (11bp,69bp) [draw,circle] {$A_1$};
  \node (C_1) at (91bp,69bp) [draw,circle] {$C_1$};
  \node (C_3) at (249bp,69bp) [draw,fill=gray,circle] {$C_3$};
  \node (C_2) at (190bp,127bp) [draw,circle] {$C_2$};
  \node (B_q) at (111bp,11bp) [draw,circle] {$B_q$};
  \node (B_1) at (51bp,69bp) [draw,circle] {$B_1$};
  \node (B_2) at (150bp,127bp) [draw,circle] {$B_2$};
  \node (B_3) at (209bp,69bp) [draw,fill=gray,circle] {$B_3$};
  \node (B_4) at (329bp,69bp) [draw,circle] {$B_4$};
  \draw [->,dashed] (A_2) ..controls (104.34bp,107.33bp) and (115.89bp,91.16bp)  .. (A_3);
  \draw [->] (C_2) ..controls (171.54bp,108.48bp) and (152.07bp,90.002bp)  .. (A_3);
  \draw [->] (B_2) ..controls (143.48bp,106.77bp) and (138.59bp,92.357bp)  .. (A_3);
  \draw [->,dashed] (C_1) ..controls (118.65bp,85.64bp) and (158.8bp,108.35bp)  .. (C_2);
  \draw [->,dashed] (A_1) ..controls (34.465bp,86.426bp) and (64.164bp,107.21bp)  .. (A_2);
  \draw [->,dashed] (C_2) ..controls (208.46bp,108.48bp) and (227.93bp,90.002bp)  .. (C_3);
  \draw [->] (C_3) ..controls (265.18bp,69bp) and (270.1bp,69bp)  .. (A_4);
  \draw [->,dashed] (B_2) ..controls (168.46bp,108.48bp) and (187.93bp,90.002bp)  .. (B_3);
  \draw [->] (B_q) ..controls (117.88bp,31.253bp) and (123.06bp,45.769bp)  .. (A_3);
  \draw [->] (B_3) ..controls (225.18bp,69bp) and (230.1bp,69bp)  .. (C_3);
  \draw [->,dashed] (B_q) ..controls (129.13bp,26.393bp) and (144.7bp,41.148bp)  .. (151bp,58bp) .. controls (157.82bp,76.247bp) and (155.44bp,99.221bp)  .. (B_2);
  \draw [->,dashed] (A_3) ..controls (153.21bp,81.737bp) and (176.4bp,93.527bp)  .. (198bp,98bp) .. controls (224.98bp,103.59bp) and (234.54bp,108.55bp)  .. (260bp,98bp) .. controls (268.7bp,94.395bp) and (276.16bp,86.769bp)  .. (A_4);
  \draw [->] (A_3) ..controls (155.9bp,87.889bp) and (191.32bp,109.88bp)  .. (220bp,98bp) .. controls (228.7bp,94.395bp) and (236.16bp,86.769bp)  .. (C_3);
  \draw [->] (C_1) ..controls (97.763bp,49.065bp) and (102.94bp,34.556bp)  .. (B_q);
  \draw [->] (B_2) ..controls (166.18bp,127bp) and (171.1bp,127bp)  .. (C_2);
  \draw [->] (A_1) ..controls (22.314bp,85.149bp) and (30.396bp,94.022bp)  .. (40bp,98bp) .. controls (49.033bp,101.74bp) and (52.967bp,101.74bp)  .. (62bp,98bp) .. controls (70.704bp,94.395bp) and (78.157bp,86.769bp)  .. (C_1);
  \draw [->] (B_1) ..controls (67.177bp,69bp) and (72.105bp,69bp)  .. (C_1);
  \draw [->] (B_1) ..controls (64.087bp,88.322bp) and (75.632bp,104.48bp)  .. (A_2);
  \draw [->,dashed] (B_1) ..controls (69.774bp,50.477bp) and (89.57bp,32.002bp)  .. (B_q);
  \draw [->] (A_3) ..controls (159.91bp,69bp) and (177.55bp,69bp)  .. (B_3);
  \draw [->] (A_2) ..controls (108.62bp,141.28bp) and (123.81bp,151.77bp)  .. (139bp,156bp) .. controls (148.42bp,158.62bp) and (151.97bp,159.74bp)  .. (161bp,156bp) .. controls (169.7bp,152.39bp) and (177.16bp,144.77bp)  .. (C_2);
  \draw [->] (A_2) ..controls (113.44bp,127bp) and (124.81bp,127bp)  .. (B_2);
  \draw [->] (C_1) ..controls (91bp,89.6bp) and (91bp,103.31bp)  .. (A_2);
  \draw [->] (C_2) ..controls (196.52bp,106.77bp) and (201.41bp,92.357bp)  .. (B_3);
  \draw [->] (B_3) ..controls (220.31bp,85.149bp) and (228.4bp,94.022bp)  .. (238bp,98bp) .. controls (247.03bp,101.74bp) and (250.97bp,101.74bp)  .. (260bp,98bp) .. controls (268.7bp,94.395bp) and (276.16bp,86.769bp)  .. (A_4);
  \draw [->] (A_1) ..controls (27.177bp,69bp) and (32.105bp,69bp)  .. (B_1);
  \draw [->] (A_4) ..controls (305.18bp,69bp) and (310.1bp,69bp)  .. (B_4);
  \draw [->,dashed] (B_3) ..controls (220.31bp,85.149bp) and (228.4bp,94.022bp)  .. (238bp,98bp) .. controls (263.46bp,108.55bp) and (274.54bp,108.55bp)  .. (300bp,98bp) .. controls (308.7bp,94.395bp) and (316.16bp,86.769bp)  .. (B_4);
  \draw [->] (C_3) ..controls (260.31bp,85.149bp) and (268.4bp,94.022bp)  .. (278bp,98bp) .. controls (287.03bp,101.74bp) and (290.97bp,101.74bp)  .. (300bp,98bp) .. controls (308.7bp,94.395bp) and (316.16bp,86.769bp)  .. (B_4);
\end{tikzpicture}

}
 \caption{Skeletal and full representation of a DAG MPCS protocol.}
\label{fig:exampleDAG}
\end{figure}
We define the \emph{knowledge} $\knowledge(v)$ 
of a vertex $v$ to be the set of 
message edges causally preceding $v$, and incoming to a vertex
of the same role. The knowledge of a vertex represents the state right
after its receive event.
\begin{eqnarray*}
\knowledge(v)& =& \{(w,v')\in\edges\mid \elabel(w,v')\in\mesg,
v'\preceq v, \nlabel(v')=\nlabel(v)\}
\end{eqnarray*}

We define the \emph{pre-knowledge} $\preknowledge(v)$ of a vertex $v$ to be
$\preknowledge(v) = \{(w,v')\in\knowledge(v)\mid v'\prec v\}$.
The pre-knowledge represents the state just 
\emph{before} the vertex' receive event has taken place.
We extend both definitions to sets $S\subseteq\vertices$:
$$
\knowledge(S) = \bigcup_{v\in S} \knowledge(v)
\quad\text{ and }\quad
\preknowledge(S) = \bigcup_{v\in S} \preknowledge(v)\text.
$$
We define the \emph{initial set} of $\protspec$, denoted $\Initset(\protspec)$ 
to be the set of vertices
of the protocol specification for which the pre-knowledge of 
the same role does not contain
an incoming edge by every other role. Formally,
\begin{eqnarray*}
v\in \Initset(\protspec)&\Longleftrightarrow&\set{\nlabel(w)\in\signers\mid (w,v')\in \preknowledge(v)} \cup\set{\nlabel(v)}\neq\signers
\end{eqnarray*}
The \emph{end set} of $\protspec$, denoted $\Endset(\protspec)$, is
the set of vertices of the protocol specification at which the
corresponding signer 
possesses all signatures.
\begin{eqnarray*}
v\in \Endset(\protspec)&\Longleftrightarrow&\set{\nlabel(w)\in\signers\mid (w,v')\in \knowledge(v), w\in\Sigset(\protspec)} \cup\set{\nlabel(v)}=\signers
\end{eqnarray*}

\subsection{Resolve Protocol}\label{sec:resolve}

Let $\protspec=(\vertices,\edges,\nlabel,\elabel,\ttpfun)$
be a DAG MPCS protocol specification.
The resolve protocol is a two-message protocol between a signer and
the TTP $\ttp$, initiated by the signer. The communication channels for this
protocol are assumed to be resilient, confidential, and authentic. $\ttp$
 is assumed to respond immediately to the signer. This is modeled
in $\protspec$ via an exit edge from a vertex
$v\in\vertices\setminus\set{v_\ttp}$ 
to the unique vertex $v_\ttp\in\vertices$. $\ttp$'s response is
given by the $\ttpfun$ function,
$\ttpfun:\mesg^*\to\set{\abort,(\signature_\signer(\contract))_{\signer\in\signers}}$.
If $m_1,\ldots,m_n$ is the sequence of messages sent by the signers 
to $\ttp$, then $\ttpfun(m_1,\ldots,m_n)$ is $\ttp$'s response for the
last signer in the sequence. The function will 
be stated formally in Definition~\ref{def:ttpfun}. 

We denote the resolve protocol
in the following by $\res(\contract,v)$.
{} 
The signer initiating $\res(\contract,v)$ is $\nlabel(v)$. 
He sends the list of messages assigned to the vertices in his pre-knowledge
$\preknowledge(v)$, prepended
by $\promise{\nlabel(v)}{\contract}{v}{\nlabel(v)}{\ttp}$,
to $\ttp$. This demonstrates that $\nlabel(v)$ has executed
all receive events causally preceding $v$. We denote $\nlabel(v)$'s
message for $\ttp$ by $p_v$:
\begin{equation}\label{eq:PCS}
p_v = \big(\promise{\nlabel(v)}{\contract}{v}{\nlabel(v)}{\ttp}, (\elabel(w,v'))_{(w,v')\in \preknowledge(v)}\big)
\end{equation}
The promise 
$\promise{\nlabel(v)}{\contract}{v}{\nlabel(v)}{\ttp}$, which is the first
element of $p_v$, is used by $\ttp$ to extract the contract
$\contract$, to learn at which step in the protocol 
$\nlabel(v)$ claims to be, and to create a 
signature on behalf of $\nlabel(v)$ when necessary.
All messages received from the signers are stored. $\ttp$ performs a
deterministic decision procedure, shown in
Algorithm~\ref{algo:resolve}, on the received message and existing
stored messages and \emph{immediately} sends back $\abort$ or the list 
of signatures $(\signature_\signer(\contract))_{\signer\in\signers}$.

Our decision procedure 
is based on~\cite{mr08,KR12}.
The input to the algorithm consists of a message $m$ received by the
$\ttp$ from a signer and state information which is maintained by $\ttp$.
$\ttp$
extracts the contract and the DAG MPCS protocol specification
from $m$.
For each contract $\contract$, $\ttp$ maintains the following state
information. A list
$\evidence_\contract$ of
all messages received from signers, a set $I_\contract$
of vertices the signers contacted $\ttp$ from, 
a set $\dishonest_\contract$ of signers
considered to be dishonest, and the last decision made $\decision_\contract$.
If $\ttp$ has not been contacted by any
signer regarding contract $\contract$, then
$\decision_\contract=\abort$. Else, $\decision_\contract$ is equal to
$\abort$ or the list $(\signature_\othersigner(\contract)_{\othersigner\in\signers})$
of 
signatures on $\contract$, one by each signer.

$\ttp$ verifies that the request is legitimate 
in that the received message $m$ is valid and the requesting signer
$\signer$ is
not already considered to be dishonest. 
If these preliminary checks pass, the message is appended to $\evidence_\contract$.
This is described in Algorithm~\ref{algo:resolve} in
lines~\ref{line:Rfirst} through~\ref{line:evidence}.
The main part of the algorithm, starting at
line~\ref{line:forcedAbort}, concerns the detection of signers who have
continued the main protocol execution after executing the resolve protocol. 
If $\signer$ has not received a promise from every other signer in the
protocol (i.e.~the if clause in line~\ref{line:forcedAbort} is not satisfied),
then $\ttp$ sends back the last decision made (line~\ref{line:retDecision}). This decision is an
\abort{} token unless $\ttp$ has been contacted before and decided to send back a signed contract.
If $\signer$ has received a promise from every other signer, 
$\ttp$ computes the set of dishonest signers
(lines~\ref{line:caughtRepeat1} through~\ref{line:caughtRepeat3}) by adding to it
every signer which has carried out the resolve protocol, but can be
seen to have continued the protocol execution
(line~\ref{line:caughtRepeat2}) based on the evidence the TTP has collected.
If $\signer$ is the only honest signer that has
contacted $\ttp$ until this point in time, the decision is made to henceforth
return a signed contract. 

\LinesNumbered
\begin{algorithm}
\SetKwInOut{Input}{input}\SetKwInOut{Output}{output}
\SetKw{Return}{return}
\SetKw{Extract}{extract}
\SetKw{Result}{output}
\Input{$m,r,\decision_\contract,\evidence_\contract,I_\contract,\dishonest_\contract$}
\Output{$r,\decision_\contract,\evidence_\contract,I_\contract,\dishonest_\contract$}
\BlankLine
\If{\nllabel{line:Rfirst}$m \neq (\promise{\signer}{\contract}{v}{\signer}{\ttp},\mathit{list})$}
{$r=\abort$\; \Return{\Result}\;}
\If{\nllabel{line:verifRepeat}$\signer\in\dishonest_\contract
\vee \forall w\in\vertices: m \neq p_w 
\vee \exists w'\in I_\contract: \signer=\nlabel(w')$}
{$\dishonest_\contract:=\dishonest_\contract\cup\set{\signer}$\;$r=\abort$\;\Return{\Result}\;}
$I_\contract:=I_\contract\cup \{v\}$\;\nllabel{line:addToI}
$\evidence_\contract:=(\evidence_\contract, m)$\;\nllabel{line:evidence}
\If{$v\notin\Initset(\protspec)$\nllabel{line:forcedAbort}}
{\For{$w\in I_\contract$\nllabel{line:caughtRepeat1}}
{\If{\nllabel{line:caughtRepeat2}$w\prec (w',x)\in \preknowledge(I_\contract)\wedge \nlabel(w')=\nlabel(w)$}
{$\dishonest_\contract:=\dishonest_\contract\cup\{\nlabel(w)\}$\;\nllabel{line:caughtRepeat3}}}
\If{\nllabel{line:P_honest}$\forall w\in I_\contract: \nlabel(w)\notin\dishonest_\contract\implies \nlabel(w)= \signer$}
{$\decision_\contract:=(\signature_\othersigner(\contract))_{\othersigner\in \signers}$\;\nllabel{line:resolve}
}}
$r=\decision_\contract$\;
\Return{\Result}\;\nllabel{line:retDecision} 
\caption{TTP decision procedure $\ttpfun_0$}
\label{algo:resolve}
\end{algorithm}

\begin{definition}\label{def:ttpfun}
Let $\protspec=(\vertices,\edges,\nlabel,\elabel,\ttpfun)$ 
be a DAG MPCS protocol specification and $\ttpfun_0$ the TTP
decision procedure from Algorithm~\ref{algo:resolve}.
Then $\ttpfun: \mesg^*\to\mesg$ is defined for
$m_1,\ldots,m_n\in\mesg$ by
$$\ttpfun(m_1,\ldots,m_n) = \pi_1(\ttpfun_1(m_1,\ldots,m_n))\text,$$
where 
$\pi_1$ is the projection to the first coordinate and $\ttpfun_1$ is
defined inductively by
\begin{eqnarray*}
\ttpfun_1() &=& (\abort,\abort,\emptyset,\emptyset,\emptyset)\\
\ttpfun_1(m_1,\ldots,m_n) &=& \ttpfun_0(m_n,\ttpfun_1(m_1,\ldots,m_{n-1}))\text.
\end{eqnarray*}
\end{definition}

Thus the $\ttpfun$ function represents the response of the TTP in the
$\res(\contract,v)$ protocol for all executions of
$\protspec$. 

\subsection{Fairness}\label{sec:fairness}

We say that a DAG MPCS protocol execution is fair for
signer $\signer$ if one of the following three conditions is true:
(i) No signer has received a signature of $\signer$; 
(ii) $\signer$ has received signatures of all other signers; 
(iii) $\signer$ has not received an $\abort$ token from the TTP.
The last condition allows an execution to be fair as long as there is
a possibility for the 
signer to receive signatures of all other signers.

The key problem in formalizing these conditions is to
capture under which circumstances the TTP responds with an
$\abort$ token to a request by a signer. The TTP's response is dependent
on the decision procedure which in turn depends on the order in which
the TTP is contacted by the signers. Since the decision procedure is
deterministic, it follows that the $\ttpfun$ function can be
determined for every execution $\rho =
[\state_0,\alpha_1,\state_1,\dots,\state_n]$ by considering the
pre-knowledge of the vertices from which the $\exit$ transitions are
taken. Abusing notation, we will write $\ttpfun(\rho)$ instead of
$\ttpfun(m_1,\ldots,m_k)$ where $m_i$ are the messages sent to the TTP
at the $i$-th $\exit$ transition in the execution.

\begin{definition}\label{def:fair-execution}
Let $\ttp$ be the TTP.
An execution $\rho=[\state_0,\alpha_1,\dots,\state_n]$ of $\protspec$ 
is \emph{fair} for
signer $\signer$ if one of the following conditions is satisfied:
\begin{enumerate}
\item 
$\signer$ has not sent a signature and no
 signer has received signatures from $\ttp$.
$$\ttpfun(\rho)=\abort\wedge\forall (v,w)\in\state_n: \nlabel(v)=\signer, \nlabel(w)\neq \signer \Longrightarrow v\not\in\Sigset(\protspec)$$
\item 
$\signer$ has received signatures from all other signers.
$$\exists v\in\state\cap\Endset(\protspec):
 \nlabel(v)=\signer$$
\item 
$\signer$ has not received an $\abort$ token from $\ttp$.
$$
\exists (v,w)\in\state: \nlabel(v)=\signer\wedge \nlabel(w)=\exitrole
\Rightarrow \ttpfun([\state_0,\dots,\state_{k},\exit,\state_k\cup\set{(v,w)}])\neq\abort
$$
\end{enumerate}
If none of these conditions are satisfied, 
the execution is unfair for $\signer$.
\end{definition}

\begin{definition}\label{def:fairness}
 An MPCS protocol specification $\protspec$ is said to be
 fair, if every 
 execution $\rho$ of $\protspec$ is fair for all signers
 that are honest in $\rho$.
\end{definition}

\subsection{Sufficient and necessary conditions}\label{sec:suffnec}
By the TTP decision procedure, $\ttp$ returns an
\abort{} token if contacted from a vertex 
$v\in\Initset(\protspec)$. Thus a necessary condition for fairness is that an
honest signer executes all steps of the initial set
causally before all steps of the signing set that are not in the end set:
\begin{eqnarray}\label{condition1}
\hspace{-4ex}&&\forall v\in\Initset(\protspec), w\in\Sigset(\protspec)\setminus\Endset(\protspec): \nlabel(v)=\nlabel(w)\implies v\prec w
\end{eqnarray}
If $\signer$ contacts $\ttp$ from a vertex $v\not\in\Initset(\protspec)$, then 
$\ttp$ responds with an \abort{} token if it has already issued an
\abort{} token to another signer who is not in the set $\dishonest_\contract$.
This condition can be exploited by a group of dishonest
signers in an \emph{abort chaining attack}~\cite{MR07}. The
following definition states the requirements for a successful abort
chaining attack.
For ease of reading, we define the predicate $\hon(v,I)$. The
predicate is true if there is no evidence (pre-knowledge) at the
vertices in $I$ that the signer $\nlabel(v)$ has sent a message
at or causally after $v$:
$$\hon(v,I) \equiv \neg\exists{(x,y)\in \preknowledge(I)}:
v\prec(x,y)\wedge \nlabel(v)=\nlabel(x)$$
This is precisely 
the criterion used by $\ttp$ to verify
honesty
in Algorithm~\ref{algo:resolve}, line~\ref{line:caughtRepeat2}. 

\begin{definition}\label{def:PACseq}
Let $\contract$ be a contract and $l\leq\abs{\signers}$.
A sequence $(v_1, \ldots, v_l\mid s)$ over $\vertices$ is called an
abort-chaining sequence (AC sequence) for $\protspec$
if the following conditions hold:
 \begin{enumerate}
 \item\label{PACcond:forcedAbort} Signer $\nlabel(v_1)$ receives an
 abort token: $v_1\in\Initset(\protspec)$
 \item\label{PACcondition1} No signer contacts $\ttp$ more than once:
 $\forall_{i \neq j} \; \nlabel(v_i) \neq \nlabel(v_j)$
 \item\label{PACcondition4} The present and previous signer to
 contact $\ttp$ are considered honest by $\ttp$:
$$\forall i\leq l:\hon(v_i,\set{v_1,\ldots,v_i})\wedge \hon(v_{i-1},\set{v_1,\ldots,v_i})$$

\item\label{PACconditionNew} The last signer to contact $\ttp$ has
 not previously received all signatures:
$$\forall v\prec v_l: \nlabel(v)=\nlabel(v_l)\implies v\not\in\Endset(\protspec)$$

\item\label{PACcondition3} The last signer to contact $\ttp$ has sent
 a signature before contacting $\ttp$ or in a parallel thread:
$$s\in\Sigset(\protspec)\setminus\Endset(\protspec): \nlabel(s)=\nlabel(v_l) \wedge v_l\not\preceq s$$
 \end{enumerate}
\end{definition}

The AC sequence represents the order in which signers execute the
resolve protocol with $\ttp$. A vertex $v_i$ in the sequence implies an
exit transition via the edge $(v_i,v_\ttp)$ in the protocol execution.
An abort
chaining attack must start at a step in which $\ttp$ has no choice
but to respond with an abort token due to lack of information. 
Condition~\ref{PACcond:forcedAbort} covers this. 
Each signer may run the resolve protocol at most
once. This is covered by Condition~\ref{PACcondition1}. 
To ensure that $\ttp$ continues to issue \abort{} tokens, 
Condition~\ref{PACcondition4}
requires that there must always be a signer
which 
according to $\ttp$'s evidence has not 
continued protocol execution after contacting $\ttp$.
To complete an abort chaining attack, there
needs to be a signer which has issued a signature
(Condition~\ref{PACcondition3}), but has not received a signature
(Conditions~\ref{PACconditionNew} and~\ref{PACcondition3}) and 
will not receive a signed contract from $\ttp$ because there is an
honest signer (by Condition~\ref{PACcondition4}) which has received an
$\abort$ token. 

It is not surprising (but nevertheless proven in the appendix)
that a
protocol with an AC sequence is unfair. However, the converse is true, too.
\begin{theorem}\label{th:fairness}
Let $\protspec$ be a DAG MPCS protocol.
Then $\protspec$ is fair if and only if it has no AC sequences.
\end{theorem}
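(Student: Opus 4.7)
The plan is to prove the two implications of the biconditional separately. For the (easier) $\Rightarrow$ direction, given an AC sequence $(v_1, \ldots, v_l\mid s)$, I would construct an unfair execution as follows. Let $\nlabel(v_1), \ldots, \nlabel(v_{l-1})$ be dishonest and $\nlabel(v_l)$ honest; the adversary drives each $\nlabel(v_i)$ to send exactly the messages required to populate the pre-knowledge of $v_i$ and then to contact $\ttp$ from $v_i$, while $\nlabel(v_l)$ plays her role honestly, sending the signature at $s$ (possible because $v_l \not\preceq s$ allows $s$ to lie in a thread parallel to or preceding the exit) and eventually contacting $\ttp$ from $v_l$. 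A walk through Algorithm~\ref{algo:resolve} shows that $\ttp$ responds \abort{} at every one of these exits: the first exit is forced to \abort{} by the $\Initset(\protspec)$ test on line~\ref{line:forcedAbort} together with clause~\ref{PACcond:forcedAbort} of Definition~\ref{def:PACseq}, and subsequent exits preserve $\decision_\contract = \abort$ because clause~\ref{PACcondition4} precisely mirrors the $\hon$ test on line~\ref{line:caughtRepeat2}, so the previous signer remains classified honest and the check on line~\ref{line:P_honest} fails. Thus $\nlabel(v_l)$ has sent a signature, fails to reach $\Endset(\protspec)$ by clause~\ref{PACconditionNew}, and receives \abort{}, so the execution is unfair for her.

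For the $\Leftarrow$ direction I argue by contrapositive: from an unfair execution $\rho$ for some honest signer $\signer$, I extract an AC sequence. Negating the three clauses of Definition~\ref{def:fair-execution} yields that $\signer$ has contacted $\ttp$ via some exit edge $(v_l, v_\ttp)$ and received \abort{}, did not reach $\Endset(\protspec)$, and has sent some signature at a vertex $s$ with $\nlabel(s) = \signer$. Honesty of $\signer$ forbids send events causally following $v_l$, so $v_l \not\preceq s$. I would then list the \exit{} transitions of $\rho$ in chronological order and retain those exit-vertices whose signer is not in $\dishonest_\contract$ at the moment of processing, obtaining a subsequence $v_1, \ldots, v_l$ ending at $\signer$'s exit. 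Verification of Definition~\ref{def:PACseq} is then a direct translation: clause~\ref{PACcondition1} from the repeat-check on line~\ref{line:verifRepeat}; clause~\ref{PACcondition4} from line~\ref{line:caughtRepeat2}; clauses~\ref{PACconditionNew} and~\ref{PACcondition3} from $\signer$'s failure to reach $\Endset(\protspec)$ and the existence of $s$; and clause~\ref{PACcond:forcedAbort} holds because otherwise the first retained exit would enter the block at line~\ref{line:forcedAbort} with $I_\contract = \{v_1\}$, pass the check on line~\ref{line:P_honest}, and set $\decision_\contract$ to a signature list, contradicting the \abort{} eventually seen by $\signer$.

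The main obstacle is the bookkeeping for the converse direction: $\rho$ may interleave arbitrary exit requests from dishonest signers with the ones I retain, so I must establish two invariants of Algorithm~\ref{algo:resolve} that persist across the whole trace. First, that signers flagged by the algorithm as dishonest (either by failing the prechecks on line~\ref{line:verifRepeat} or by being identified on line~\ref{line:caughtRepeat2}) do not alter the chain of \abort{} decisions observed by the retained subsequence. Second, that the predicate $\hon(v_i, \{v_1,\ldots,v_i\})$ evaluated over the retained vertices coincides with the TTP's classification at the corresponding point in $\rho$. Once these invariants are formalised, each clause of the AC-sequence definition can be read off the matching check in the algorithm, completing the proof.
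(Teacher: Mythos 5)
Your forward direction (AC sequence $\Rightarrow$ unfair) is essentially the paper's own proof: the paper likewise builds the execution by taking, for each $v_i$, the causal closure of $\set{s,v_i}$, appending the exit transitions in the order of the AC sequence, and arguing via Algorithm~\ref{algo:resolve} that every exit returns \abort{} while $\nlabel(v_l)$ remains honest. The one detail you gloss over is making $\nlabel(v_l)$'s \emph{restricted} execution actually satisfy Definition~\ref{def:honestSigner}: condition~3 there requires that the source vertex $v_l$ of the exit edge is \emph{not} in the state, which is why the paper explicitly strips $v_l$ out of all states in the final segment $\rho_l''$. Also, it is the other signers, not $\nlabel(v_i)$, who send the messages populating $\preknowledge(v_i)$; the causal-closure construction handles this automatically where your phrasing does not.

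The genuine gap is in your extraction for the converse. Your rule --- keep each exit whose signer is not in $\dishonest_\contract$ at the moment its request is processed --- does not yield a sequence satisfying clause~\ref{PACcondition4} of Definition~\ref{def:PACseq}. That clause demands $\hon(v_{i-1},\set{v_1,\dots,v_i})$, i.e.\ that the \emph{previous} retained signer still looks honest once $v_i$'s pre-knowledge is added; but a signer who passed the check on line~\ref{line:caughtRepeat2} at its own request can perfectly well be exposed as dishonest by evidence arriving with a later request. Your second ``invariant'' (that $\hon$ over the retained set coincides with the TTP's classification) is exactly the assertion that can fail here, and it cannot simply be ``formalised'' --- for your forward filtering it is false in general. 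The correct construction works \emph{backwards} from the honest signer's abort: since line~\ref{line:P_honest} failed at $v_l$, there is a concrete blocking vertex $w\in I_\contract$ with $\nlabel(w)\notin\dishonest_\contract$ and $\nlabel(w)\neq\signer$; take that blocker as $v_{l-1}$, observe that it too received \abort{}, and recurse until a request lands in $\Initset(\protspec)$. Because $\hon$ is antitone in the evidence set, ``honest w.r.t.\ the TTP's full $I_\contract$ at step $i$'' implies ``honest w.r.t.\ the retained subset $\set{v_1,\dots,v_i}$,'' which is what makes clause~\ref{PACcondition4} come out for the backward chain but not for your forward one. You also silently rely on the fact that an honest signer is never placed in $\dishonest_\contract$ (so that $\signer$'s own exit is processed and blocks any later resolve); that needs to be stated and proved from Definition~\ref{def:honestSigner} and lines~\ref{line:verifRepeat} and~\ref{line:caughtRepeat2}.
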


The proof of this and the following theorems is given in the appendix.

\subsection{Fairness criteria}\label{sec:criteria}

Theorem~\ref{th:fairness} reduces 
the verification of fairness from 
analyzing all
executions to verifying that there is no AC-sequence (Definition~\ref{def:PACseq}).
This, however, is still
difficult to verify in general. The following two 
results are tools
to quickly assess fairness of DAG MPCS protocols. The first is an
unfairness criterion and the second is a fairness criterion for a
large class of DAG MPCS protocols.

The following theorem states that
in a fair DAG MPCS protocol, 
the union of paths from the initial set to every vertex
$v\in\Sigset(\protspec)$ 
must contain all permutations of all signers (other than
$\nlabel(v)$) as subsequences. In the class of linear MPCS
protocols, considered in~\cite{KR12}, this criterion was both
necessary and sufficient. We show in Example~\ref{ex:insufficient} below that this
criterion is not sufficient for fairness of DAG MPCS protocols. 

For $I\subseteq\vertices$, $v\in\vertices$, we denote by 
$\paths(I,v)=\{(v_1,\ldots,v_k)\in\vertices^*\mid v_1\in I, v_k=v,
 \forall_{1\leq i< k}: (v_i,v_{i+1})\in\edges\}$ the set of all
 directed paths from a vertex in $I$ to $v$.
If $p=(v_1,\ldots,v_k)$ is a sequence of vertices, 
we denote by $\nlabel(p)=(\nlabel(v_1),\ldots,\nlabel(v_k))$ the corresponding sequence of signers.
The sequences of signers corresponding to the paths from $I$ to $v$
is denoted by $\seq(I,v) = \set{\nlabel(p)\in \signers^*\mid
 p\in\paths(I,v)}$.

\begin{theorem}\label{thm:unfairness}
Let $k=\abs{\signers}$. Let $\protspec$ be an optimistic fair DAG MPCS protocol, 
$$I=\set{v\in\Initset(\protspec)\mid (v\prec w\wedge
 \nlabel(v)=\nlabel(w))\Rightarrow w\not\in\Initset(\protspec)}.$$
If $v\in\Sigset(\protspec)$, then 
for every permutation $(\signer_1,\ldots,\signer_{k-1})$ of
signers in $\signers\setminus\set{\nlabel(v)}$, there exists a sequence in 
$\seq(I,v)$ which contains $(\signer_1,\ldots,\signer_{k-1})$ as a (not
necessarily consecutive) subsequence.
\end{theorem}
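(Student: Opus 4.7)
The approach is by contrapositive, invoking Theorem~\ref{th:fairness}: I assume that for some $v\in\Sigset(\protspec)$ there is a permutation $(\signer_1,\ldots,\signer_{k-1})$ of $\signers\setminus\set{\nlabel(v)}$ which is not a subsequence of any element of $\seq(I,v)$, and from this I build an AC sequence, contradicting fairness.

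The target AC sequence is $(v_1,\ldots,v_{k-1},v_k\mid v)$ with $\nlabel(v_i)=\signer_i$ for $i<k$, $\nlabel(v_k)=\nlabel(v)$, and $s=v$ as the signing witness. First I pick $v_1\in I$ with $\nlabel(v_1)=\signer_1$; such a vertex exists because every role has a maximal initial vertex by the structure of DAG MPCS protocols. Inductively, for $i=2,\ldots,k-1$, I choose $v_i$ of role $\signer_i$ so that the honesty predicate $\hon(v_j,\set{v_1,\ldots,v_i})$ holds for every $j\leq i$, while keeping $v_i$ \emph{connected towards $v$} (for instance, lying on some directed path that continues to $v$ in the DAG). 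Finally I pick $v_k$ of role $\nlabel(v)$ with $v_k\not\preceq v$, so that $s=v$ witnesses Condition~\ref{PACcondition3} of Definition~\ref{def:PACseq}.

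The missing-permutation hypothesis is used crucially in the inductive step: if no valid $v_i$ existed, then for every candidate $\signer_i$-vertex $w$ extending the partial construction, some already-chosen $v_j$ must have, in $\preknowledge(w)$, a $\signer_j$-vertex strictly following $v_j$. Stitching these witnessing edges and pre-knowledge fragments together along the DAG yields a directed path from a vertex of $I$ to $v$ whose signer sequence realizes $(\signer_1,\ldots,\signer_{k-1})$ as a subsequence — the desired contradiction. Once the $v_i$'s are in hand, the remaining AC-sequence items~\ref{PACcond:forcedAbort}, \ref{PACcondition1}, and~\ref{PACconditionNew} are immediate from the construction and the definitions of $I$, $\Sigset(\protspec)$, and $\Endset(\protspec)$.

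The main technical obstacle is the stitching step in the DAG setting. In the linear MPCS setting of~\cite{MRT09,KR12} each pre-knowledge is a single chain and one follows a unique backbone, but here pre-knowledge fans in from several parallel threads, so the $v_i$ must be chosen jointly with respect to all earlier $v_j$'s and the argument must exploit Condition~\ref{cond:transitivity} of Definition~\ref{def:parallelMPCSspec} to keep the glued path in $\edges$. A further subtlety is the choice of $v_k$: if $\nlabel(v)$ has no vertex incomparable to or following $v$, the AC sequence must be truncated to length $k-1$ and a different signing witness $s$ of role $\signer_{k-1}$ used, which requires a small adaptation of the inductive construction but no new idea.
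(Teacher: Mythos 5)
Your high-level strategy — argue by contrapositive via Theorem~\ref{th:fairness} and manufacture an AC sequence from a missing permutation — is exactly the paper's. But the core mechanism is missing, and the one concrete design choice you do make points the wrong way. The paper's construction is: let $V_1$ be the $\signer_1$-vertices in $I$ and, for $i>1$, let $V_i=\minset(\set{w\mid \nlabel(w)=\signer_i\wedge \exists w'\in V_{i-1}: w'\prec w})$. \emph{Minimality} is what discharges the honesty predicates for free: if some edge $(x,y)\in\preknowledge(v_i)$ witnessed $\signer_{i-1}$ acting at or after $v_{i-1}$, then $y$ would be a $\signer_i$-vertex causally between $V_{i-1}$ and $v_i$, contradicting $v_i\in\minset(\cdot)$. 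You instead ``choose $v_i$ so that $\hon$ holds,'' which assumes the thing to be constructed, and your fallback existence argument (stitching pre-knowledge witnesses of \emph{all} candidate $\signer_i$-vertices into a directed path from $I$ to $v$) does not go through: those witnesses live in the pre-knowledge of arbitrary $\signer_i$-vertices and there is no reason they assemble into a path ending at $v$. Moreover, your insistence that each $v_i$ lie on a directed path continuing to $v$ inverts the role of the hypothesis: if the whole chain realizing $(\signer_1,\ldots,\signer_{k-1})$ stayed on paths to $v$, the permutation \emph{would} occur in $\seq(I,v)$. The missing permutation is used to show the minset chain must \emph{overshoot} $v$ — for some $j$ there is $v_j\in V_j$ with $v\prec v_j$ — and the AC sequence is the truncated $(v_1,\ldots,v_j,v'\mid v)$ with $v'\succ v$, $\nlabel(v')=\nlabel(v)$, not a full length-$k$ sequence.

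The degenerate cases are also mishandled. When $\nlabel(v)$ has no vertex causally following $v$, or has causally unrelated signing vertices, the paper does \emph{not} build a shorter AC sequence with a different signing witness (your proposed truncation to length $k-1$ with $s$ of role $\signer_{k-1}$ fails Condition~\ref{PACcondition3}, since $\signer_{k-1}$ need not have a signing vertex incomparable to $v_{k-1}$). Instead it first reduces to causally earliest signing vertices outside $\Endset(\protspec)$ (the two ``Facts'' at the start of the proof), and then invokes Lemma~\ref{l:commonAncestorPerm} on a maximal common ancestor of the parallel threads of $\nlabel(v)$ to conclude the permutations already occur at a vertex preceding $v$. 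Finally, Condition~\ref{PACconditionNew} is not ``immediate'' without that reduction in place. So the proposal names the right target but leaves the actual proof content — the minset construction, the overshoot argument, and the common-ancestor lemma — unsupplied.
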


The converse of the theorem is not true as the following example
shows. In particular, this example demonstrates that the addition of a
vertex to a fair DAG MPCS protocol does not necessarily preserve
fairness.

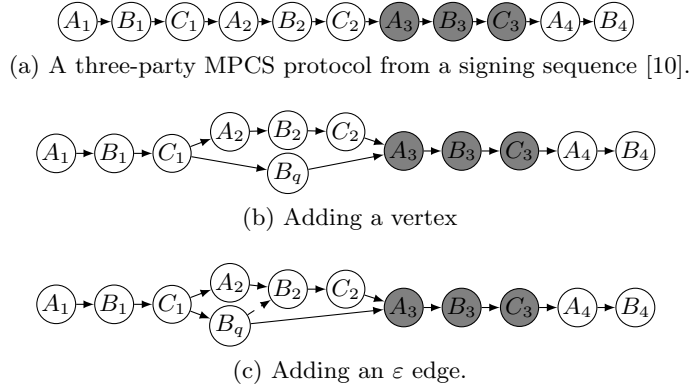
\begin{figure}[t]
\centering
\subfloat[\label{fig:classical3top}A three-party MPCS protocol from a signing sequence~\cite{KR12}.]{\qquad\tikzset{every picture/.style={scale=.5}}\begin{tikzpicture}[>=latex,line join=bevel]
\small
\begin{scope}
  \pgfsetstrokecolor{black}
  \definecolor{strokecol}{rgb}{1.0,1.0,1.0};
  \pgfsetstrokecolor{strokecol}
  \definecolor{fillcol}{rgb}{1.0,1.0,1.0};
  \pgfsetfillcolor{fillcol}
  \filldraw (0bp,-1bp) -- (0bp,22bp) -- (423bp,22bp) -- (423bp,-1bp) -- cycle;
\end{scope}
  \node (A_4) at (371bp,11bp) [draw,circle] {$A_4$};
  \node (A_3) at (251bp,11bp) [draw,fill=gray,circle] {$A_3$};
  \node (A_2) at (131bp,11bp) [draw,circle] {$A_2$};
  \node (A_1) at (11bp,11bp) [draw,circle] {$A_1$};
  \node (C_1) at (91bp,11bp) [draw,circle] {$C_1$};
  \node (C_3) at (331bp,11bp) [draw,fill=gray,circle] {$C_3$};
  \node (C_2) at (211bp,11bp) [draw,circle] {$C_2$};
  \node (B_1) at (51bp,11bp) [draw,circle] {$B_1$};
  \node (B_2) at (171bp,11bp) [draw,circle] {$B_2$};
  \node (B_3) at (291bp,11bp) [draw,fill=gray,circle] {$B_3$};
  \node (B_4) at (411bp,11bp) [draw,circle] {$B_4$};
  \draw [->] (B_1) ..controls (67.177bp,11bp) and (72.105bp,11bp)  .. (C_1);
  \draw [->] (C_3) ..controls (347.18bp,11bp) and (352.1bp,11bp)  .. (A_4);
  \draw [->] (C_1) ..controls (107.18bp,11bp) and (112.1bp,11bp)  .. (A_2);
  \draw [->] (B_3) ..controls (307.18bp,11bp) and (312.1bp,11bp)  .. (C_3);
  \draw [->] (C_2) ..controls (227.18bp,11bp) and (232.1bp,11bp)  .. (A_3);
  \draw [->] (A_3) ..controls (267.18bp,11bp) and (272.1bp,11bp)  .. (B_3);
  \draw [->] (A_1) ..controls (27.177bp,11bp) and (32.105bp,11bp)  .. (B_1);
  \draw [->] (A_4) ..controls (387.18bp,11bp) and (392.1bp,11bp)  .. (B_4);
  \draw [->] (A_2) ..controls (147.18bp,11bp) and (152.1bp,11bp)  .. (B_2);
  \draw [->] (B_2) ..controls (187.18bp,11bp) and (192.1bp,11bp)  .. (C_2);
\end{tikzpicture}

\qquad}\\
\subfloat[\label{fig:classical3withB}Adding a vertex]{\tikzset{every picture/.style={scale=.3}}\begin{tikzpicture}[>=latex,line join=bevel]
\small
\node (A_4) at (666bp,46bp) [draw,circle] {$A_4$};
  \node (A_3) at (450bp,46bp) [draw,fill=gray,circle] {$A_3$};
  \node (A_2) at (234bp,70bp) [draw,circle] {$A_2$};
  \node (A_1) at (18bp,43bp) [draw,circle] {$A_1$};
  \node (C_1) at (162bp,43bp) [draw,circle] {$C_1$};
  \node (C_3) at (594bp,46bp) [draw,fill=gray,circle] {$C_3$};
  \node (C_2) at (378bp,70bp) [draw,circle] {$C_2$};
  \node (B_q) at (306bp,18bp) [draw,circle] {$B_q$};
  \node (B_1) at (90bp,43bp) [draw,circle] {$B_1$};
  \node (B_2) at (306bp,72bp) [draw,circle] {$B_2$};
  \node (B_3) at (522bp,46bp) [draw,fill=gray,circle] {$B_3$};
  \node (B_4) at (738bp,46bp) [draw,circle] {$B_4$};
  \draw [->] (B_1) ..controls (118.31bp,43bp) and (130.93bp,43bp)  .. (C_1);
  \draw [->] (C_3) ..controls (622.31bp,46bp) and (634.93bp,46bp)  .. (A_4);
  \draw [->] (B_q) ..controls (350.32bp,26.542bp) and (401.53bp,36.64bp)  .. (A_3);
  \draw [->] (B_3) ..controls (550.31bp,46bp) and (562.93bp,46bp)  .. (C_3);
  \draw [->] (C_2) ..controls (405.75bp,60.829bp) and (419.47bp,56.125bp)  .. (A_3);
  \draw [->] (A_3) ..controls (478.31bp,46bp) and (490.93bp,46bp)  .. (B_3);
  \draw [->] (A_1) ..controls (46.306bp,43bp) and (58.926bp,43bp)  .. (B_1);
  \draw [->] (A_2) ..controls (262.31bp,70.78bp) and (274.93bp,71.141bp)  .. (B_2);
  \draw [->] (C_1) ..controls (189.75bp,53.317bp) and (203.47bp,58.61bp)  .. (A_2);
  \draw [->] (C_1) ..controls (206.32bp,35.373bp) and (257.53bp,26.357bp)  .. (B_q);
  \draw [->] (A_4) ..controls (694.31bp,46bp) and (706.93bp,46bp)  .. (B_4);
  \draw [->] (B_2) ..controls (334.31bp,71.22bp) and (346.93bp,70.859bp)  .. (C_2);
\end{tikzpicture}

}\\
\subfloat[\label{fig:classical3withBcausal}Adding an $\causal$ edge.]{\tikzset{every picture/.style={scale=.3}}\begin{tikzpicture}[>=latex,line join=bevel]
\small
\node (A_4) at (666bp,41bp) [draw,circle] {$A_4$};
  \node (A_3) at (450bp,41bp) [draw,fill=gray,circle] {$A_3$};
  \node (A_2) at (234bp,72bp) [draw,circle] {$A_2$};
  \node (A_1) at (18bp,45bp) [draw,circle] {$A_1$};
  \node (C_1) at (162bp,45bp) [draw,circle] {$C_1$};
  \node (C_3) at (594bp,41bp) [draw,fill=gray,circle] {$C_3$};
  \node (C_2) at (378bp,64bp) [draw,circle] {$C_2$};
  \node (B_q) at (234bp,18bp) [draw,circle] {$B_q$};
  \node (B_1) at (90bp,45bp) [draw,circle] {$B_1$};
  \node (B_2) at (306bp,64bp) [draw,circle] {$B_2$};
  \node (B_3) at (522bp,41bp) [draw,fill=gray,circle] {$B_3$};
  \node (B_4) at (738bp,41bp) [draw,circle] {$B_4$};
  \draw [->] (B_1) ..controls (118.31bp,45bp) and (130.93bp,45bp)  .. (C_1);
  \draw [->] (C_3) ..controls (622.31bp,41bp) and (634.93bp,41bp)  .. (A_4);
  \draw [->] (B_q) ..controls (291.38bp,24.059bp) and (387.46bp,34.386bp)  .. (A_3);
  \draw [->] (B_3) ..controls (550.31bp,41bp) and (562.93bp,41bp)  .. (C_3);
  \draw [->] (C_2) ..controls (405.6bp,55.26bp) and (419.08bp,50.831bp)  .. (A_3);
  \draw [->] (A_3) ..controls (478.31bp,41bp) and (490.93bp,41bp)  .. (B_3);
  \draw [->,dashed] (B_q) ..controls (260.77bp,34.933bp) and (276.4bp,45.207bp)  .. (B_2);
  \draw [->] (A_1) ..controls (46.306bp,45bp) and (58.926bp,45bp)  .. (B_1);
  \draw [->] (A_2) ..controls (261.94bp,68.921bp) and (274.64bp,67.469bp)  .. (B_2);
  \draw [->] (C_1) ..controls (189.75bp,55.317bp) and (203.47bp,60.61bp)  .. (A_2);
  \draw [->] (C_1) ..controls (189.75bp,34.683bp) and (203.47bp,29.39bp)  .. (B_q);
  \draw [->] (A_4) ..controls (694.31bp,41bp) and (706.93bp,41bp)  .. (B_4);
  \draw [->] (B_2) ..controls (334.31bp,64bp) and (346.93bp,64bp)  .. (C_2);
\end{tikzpicture}

}
\caption{\label{fig:classical3}Skeletal graphs of fair protocols
 (a, c) and an unfair protocol (b).}
\end{figure}

\begin{example}\label{ex:insufficient}
 The protocol in Figure~\ref{fig:classical3top} is fair by the results
 of~\cite{KR12}. By Theorem~\ref{thm:unfairness}, for every vertex $v\in\Sigset(\protspec)$ 
every permutation of 
signers in $\signers\setminus\set{\signer}$ occurs as a subsequence of
a path in $\seq(I,v)$. 
The protocol in Figure~\ref{fig:classical3withB} is obtained by
adding the vertex $B_q$ as a parallel thread of signer $B$. 
Thus the permutation property on the set of paths is preserved, yet
the protocol is not fair: An AC sequence is $(B_q, C_3, A_4|A_3)$. The
vertex $B_q$ is in $\Initset(\protspec)$, the evidence presented to the TTP
at $C_3$ includes the vertices causally preceding $C_2$, thus $B$ is
considered to be honest. The evidence presented by signer $A$ at $A_4$
are the ver\-ti\-ces causally preceding $A_3$ proving that $B$ is dishonest, but
$C$ is honest. Thus $A$ has sent a signature at $A_3$ but will
not receive signatures from $B$ and $C$.
\end{example}

If a protocol has no in-role parallelism, then the converse of
Theorem~\ref{thm:unfairness} is true. Thus we have a simple
criterion for the fairness of such protocols.
\begin{theorem}\label{thm:fairness-causal}
Let $\protspec$ be an optimistic DAG MPCS protocol without in-role parallelism.
Let $$I=\set{v\in\Initset(\protspec)\mid (v\prec w\wedge
 \nlabel(v)=\nlabel(w))\Rightarrow w\not\in\Initset(\protspec)}.$$ If all
paths from $I$ to $v\in\Sigset(\protspec)$ contain all permutations of
$\signers\setminus\set{\nlabel(v)}$ then $\protspec$ is fair for $\nlabel(v)$.
\end{theorem}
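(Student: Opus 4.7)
The approach is contrapositive, using the AC-sequence characterization of Theorem~\ref{th:fairness}. Assume the path-permutation hypothesis and, toward contradiction, that $\protspec$ is unfair; then $\protspec$ admits an AC-sequence $(v_1,\dots,v_l\mid s)$, and condition~\ref{PACcondition3} of Definition~\ref{def:PACseq} places $s$ in $\Sigset(\protspec)$, so the hypothesis applies at $s$: every permutation of $\signers\setminus\set{\nlabel(s)}$ is a subsequence of the role-sequence of every path in $\paths(I,s)$.

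Write $P_i:=\nlabel(v_i)$; by condition~\ref{PACcondition1} these are pairwise distinct and $P_l=\nlabel(s)$. Let $Q_1,\dots,Q_{k-l}$ enumerate $\signers\setminus\set{P_1,\dots,P_l}$. I propose to exhibit the specific permutation
$$\pi = (Q_1,\dots,Q_{k-l},P_{l-1},P_{l-2},\dots,P_1)$$
of $\signers\setminus\set{P_l}$ and to show that $\pi$ appears as a subsequence of \emph{no} path in $\paths(I,s)$, which would contradict the hypothesis and close the argument. Suppose some $p\in\paths(I,s)$ realizes $\pi$ via witnesses $y_{l-1}\prec\dots\prec y_1\prec s$ of roles $P_{l-1},\dots,P_1$, chosen so that each $y_i$ is the last $P_i$-vertex in $p$ before $y_{i-1}$.

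Because $\protspec$ has no in-role parallelism, the vertices of each role form a chain under $\prec$, so $y_i$ and $v_i$ are comparable; condition~\ref{PACcondition3} likewise forces $s\prec v_l$, hence every $y_i\prec v_l$. The $p$-successor of $y_i$ is a message-edge $(y_i,z_i)$ leaving $P_i$'s thread. The plan is to examine the largest $i^\ast$ with $y_{i^\ast}\succeq v_{i^\ast}$: tracking where $p$ lands after $y_{i^\ast}$ and using the linearity of each thread, one shows that $z_{i^\ast}$ lies in some $P_j$-thread strictly before $v_j$ with $j\leq i^\ast+1$, placing $(y_{i^\ast},z_{i^\ast})\in\preknowledge(\set{v_1,\dots,v_{i^\ast+1}})$ and directly contradicting the honesty clause $\hon(v_{i^\ast},\set{v_1,\dots,v_{i^\ast+1}})$ of condition~\ref{PACcondition4}; if no such $i^\ast$ exists then $y_i\prec v_i$ for every $i$, and an analogous analysis at $i=1$ contradicts the maximality condition in the definition of $I$ together with $v_1\in\Initset(\protspec)$. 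The main obstacle is the combinatorial bookkeeping for the intermediate $Q_t$-vertices and verifying that $z_{i^\ast}$ indeed lands strictly before the relevant $v_j$ in its own thread; the no-in-role-parallelism assumption is essential here — Example~\ref{ex:insufficient} shows the converse of Theorem~\ref{thm:unfairness} fails precisely via a parallel sidestep.
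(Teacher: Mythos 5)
Your high-level strategy matches the paper's: argue the contrapositive via Theorem~\ref{th:fairness}, pin down one specific permutation whose realization on a path to the signing vertex is incompatible with the AC sequence, and exploit the total order of each role's thread together with condition~\ref{PACcondition4} of Definition~\ref{def:PACseq} and the maximality built into $I$. The fatal problem is that you chose the permutation in the wrong order. The correct certificate is the \emph{forward} order: a permutation containing $(P_1,P_2,\ldots,P_{l-1})$ as a subsequence in the order in which the signers contact $\ttp$. This is forced by the shape of the honesty predicate: $\hon(v_i,\set{v_1,\ldots,v_{i+1}})$ forbids evidence of $P_i$ acting at or after $v_i$ from appearing in $\preknowledge(v_{i+1})$, i.e.\ it constrains message edges flowing from $P_i$'s thread \emph{into $P_{i+1}$'s thread}. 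Witnesses ordered $u_1\prec u_2\prec\cdots\prec u_{l-1}\prec w_l$, where $w_l$ is the first signing vertex of $P_l$ (so $w_l\preceq s\prec v_l$ by condition~\ref{PACcondition3} and linearity), induce exactly such edges via condition~\ref{cond:transitivity} of Definition~\ref{def:parallelMPCSspec}; a backward induction then gives $w_i\prec v_i$ for $w_i$ the last $P_i$-vertex preceding $w_{i+1}$, and at $i=1$ one obtains a $P_1$-vertex of $I$ strictly preceding $v_1\in\Initset(\protspec)$, the desired contradiction. Your reversed witnesses $y_{l-1}\prec\cdots\prec y_1\prec s$ instead induce edges from $P_i$ into $P_{i-1}$'s thread; to place such an edge into $\preknowledge(v_{i-1})$ you would already need $y_{i-1}\prec v_{i-1}$, so your recursion runs in the wrong direction and bottoms out at the clause $\hon(v_1,\set{v_1,\ldots,v_l})$, which condition~\ref{PACcondition4} does not provide (only the current and previous resolver must look honest). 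Likewise your terminal step fails: $y_1\prec v_1$ contradicts nothing, because the path only guarantees that $\preknowledge(v_1)$ sees $\signers\setminus\set{P_1,P_l}$, which is perfectly consistent with $v_1\in\Initset(\protspec)$; the paper's contradiction works only because its $P_1$-witness sits in $I$ at the head of the path.

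The intermediate claim ``the AC sequence forces $\pi$ to be absent from every path'' is in fact false. Take the linear protocol $A_1B_1C_1A_2C_2B_2A_3\ldots$ with $A_3\in\Sigset(\protspec)$: it admits the AC sequence $(B_2,C_3,A_4\mid A_3)$, so $P_1=B$, $P_2=C$ and your $\pi=(C,B)$ --- yet $(C,B)$ is realized on the path $A_2\to C_2\to B_2\to A_3$ from $I$. On that path your $i^\ast$-case fires with $y_1=B_2=v_1$, but the outgoing message edge $(B_2,A_3)$ lands in $P_3$'s thread, hence not in $\preknowledge(\set{v_1,v_2})$, and no honesty clause is violated. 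The permutation actually missing here is $(B,C)$, i.e.\ the forward order, which is the one the paper's proof (and the greedy construction in the proof of Theorem~\ref{thm:unfairness}) pairs with an AC sequence. Your other ingredients --- $s\prec v_l$ from condition~\ref{PACcondition3}, comparability of same-role vertices, and the observation that Example~\ref{ex:insufficient} shows where linearity is indispensable --- are all correct, but the proof cannot be completed with the reversed permutation.
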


\begin{example}
By adding a causal edge between vertex $B_q$ and vertex $B_2$ of the
protocol in Figure~\ref{fig:classical3withB}, as shown in
Figure~\ref{fig:classical3withBcausal},
we obtain again a fair protocol.
\end{example}

\section{Protocols}\label{sec:protocols}

In this section we illustrate the theory and results obtained in the
preceding sections by proving optimality results and constructing a variety of protocols. 

\subsection{Minimal complexity}\label{sec:minimal_complexity}

We prove lower bounds for the two complexity measures defined in our
model, viz.~parallel and message complexity.

\begin{figure}[t]
 \centering
 \includegraphics[width=.4\textwidth]{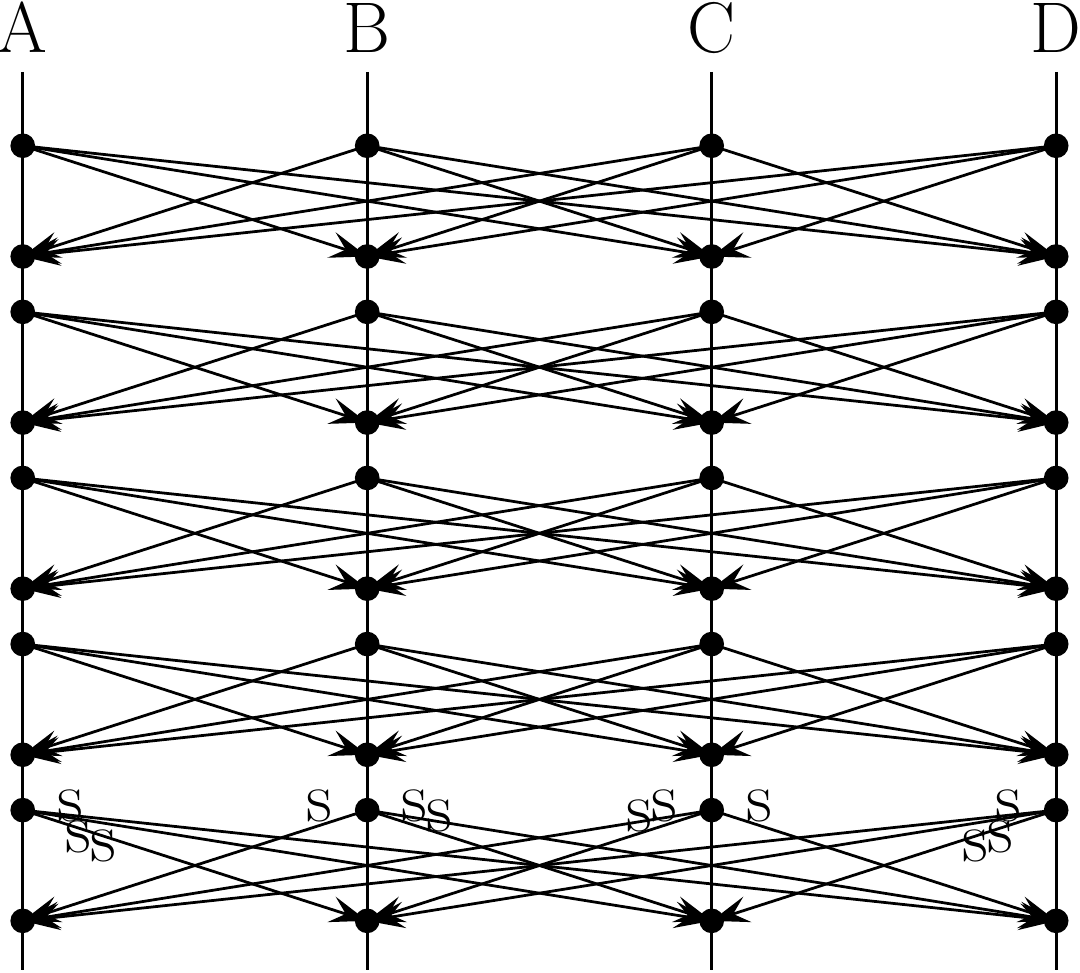}
 \caption{\label{fig:butterfly} A minimal
 4-party fair 
broadcasting protocol.}
\end{figure}

\begin{theorem}
The minimal parallel complexity for an optimistic fair DAG MPCS
protocol is $n+1$, where $n$ is the number of signers in the protocol.
\end{theorem}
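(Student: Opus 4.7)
The theorem has two directions: the upper bound $PC_\protspec \leq n+1$ is exhibited by the minimal $n$-party broadcast protocol indicated in Figure~\ref{fig:butterfly} and its generalization to arbitrary $n$; the main content is the lower bound $PC_\protspec \geq n+1$.

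For the lower bound, I would fix any signing vertex $v \in \Sigset(\protspec)$, set $P = \nlabel(v)$, and apply Theorem~\ref{thm:unfairness}: every permutation of $\signers \setminus \set{P}$ appears as a subsequence of labels along some path in $\seq(I,v)$. Along any such path $p$, the label sequence visits $n$ distinct signers (the $n-1$ entries of the permutation together with $P$ at the end), so it has at least $n-1$ label changes. Because $\causal$-edges preserve labels while message edges cross between roles (Definition~\ref{def:parallelMPCSspec}, item~\ref{cond:elabel}), each label change is a send edge. These sends lie on a single directed path and hence form a chain of at least $n-1$ causally related send edges; appending the outgoing signature edge from $v$ yields a chain of $\geq n$ sends.

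For the remaining $+1$ I would split into cases. If $p$ has strictly more than $n-1$ label changes, its send chain already has length $\geq n$, so together with $v$'s signature edge we have $\geq n+1$. Otherwise $p$ is tight: the labels visit the signers in the order $P_1,\ldots,P_{n-1},P$ of the chosen permutation, and the initial vertex $v_1\in I$ satisfies $\nlabel(v_1)=P_1$. I would then prepend a send edge by looking backward: if $v_1$ or any $P_1$-role ancestor of $v_1$ has received a message, that incoming edge's receiver is $\preceq v_1$, so it causally precedes the first send of our chain and extends the chain to length $n+1$.

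The hard part is to rule out the residual configuration in which, for every permutation, the corresponding tight path's initial vertex has empty pre-knowledge and no direct incoming message. Since Theorem~\ref{thm:unfairness} supplies a tight path for each permutation, the residual configuration would provide, for each $Q \in \signers \setminus \set{P}$, an initial vertex $v_1^{(Q)} \in I$ with $\nlabel(v_1^{(Q)})=Q$ and no causally preceding message event. I plan to contradict fairness (Theorem~\ref{th:fairness}) by constructing an abort-chaining sequence threading through the vertices $v_1^{(Q)}$: the empty pre-knowledge trivially validates the $\hon$ predicate at each step of Definition~\ref{def:PACseq}, while the tightness of the chain forces the existence of a signing vertex in $\Sigset(\protspec)\setminus\Endset(\protspec)$ incomparable to the last $v_1^{(Q)}$, which serves as the witness $s$. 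The resulting AC sequence contradicts fairness, ruling out the residual configuration and completing the lower bound.
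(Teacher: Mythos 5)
Your core count coincides with the paper's: Theorem~\ref{thm:unfairness} yields a path realizing a permutation of the other $n-1$ signers, the role changes along that path must be message edges and form a causal chain of length $n-1$, and the message edge leaving the signing vertex adds one more. The paper obtains the remaining $+1$ in a single step: it starts the path at a causally last vertex $v_1$ of the initial set, asserts that such a vertex has non-empty knowledge $\knowledge(v_1)$, and concludes that a message edge causally precedes the entire chain. Up to that point you are reproving the paper's argument; the divergence is only in how the initial $+1$ is justified.

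The gap is in your residual case. The abort-chaining sequence you sketch does not satisfy Definition~\ref{def:PACseq}. Condition~\ref{PACcondition3} requires the witness $s\in\Sigset(\protspec)\setminus\Endset(\protspec)$ to be a vertex of the \emph{last} signer in the sequence with $v_l\not\preceq s$, but your sequence threads only through the vertices $v_1^{(Q)}$ for $Q\neq P$, each of which has no causally preceding message event and is therefore a causally minimal vertex of its role. In a protocol without in-role parallelism the vertices of a role are totally ordered, so every signing vertex of $\nlabel(v_l)$ causally follows $v_l$ and no admissible $s$ exists: ``tightness of the chain'' does not force the incomparable signing vertex you invoke. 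Moreover, the signer the abort chain is supposed to cheat is $P=\nlabel(v)$, who signed at $v$, yet no vertex of $P$ appears in your sequence, and condition~\ref{PACcondition4} is never checked for whatever terminal vertex would have to be appended. As it stands the contradiction does not go through, so the residual configuration is not excluded and the bound is not established. To close it you either need to argue directly that the permutation path can be started at a vertex with non-empty knowledge (the paper's route), or construct the AC sequence so that it terminates at a suitable vertex of $P$, which requires the kind of analysis carried out in Lemma~\ref{l:commonAncestorPerm} rather than following from tightness alone.
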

\begin{proof}
 By Theorem~\ref{thm:unfairness}, every permutation of signers in the
 protocol must occur as a subsequence in the set of paths from a
 causally last vertex
 in the initial set to a vertex in the signing set. Since a last
 vertex $v$ in the initial set must have a non-empty knowledge
 $\knowledge(v)$, there must be a message edge causally preceding
 $v$. There are at
 least $n-1$ edges in the path between the vertices associated with the $n$
 signers in a permutation and there is at least one message edge
 outgoing from a vertex in the signing set. 
Thus a minimal length path for such a protocol must contain $n+1$ edges. 
\end{proof}

The minimal parallel complexity is attained by the broadcast protocols
of Baum-Waidner and Waidner~\cite{BW98}. An example 
is shown in Figure~\ref{fig:butterfly}.

\begin{theorem}\label{thm:minimal_message_complexity}
The minimal message complexity for an optimistic fair DAG MPCS
protocol is $\lambda(n)+2n-3$, where $n$ is the number of signers in
the protocol and $\lambda(n)$ is the
length of the shortest sequence which contains all permutations of
elements of an $n$-element set as subsequences.

The minimal message complexities for $2<n<8$ are $n^2+1$. 
The minimal message complexities for $n\geq 10$ are smaller or equal
to $n^2$.
\end{theorem}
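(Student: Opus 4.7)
The plan is to prove the equality $MC_\protspec = \lambda(n) + 2n - 3$ by establishing matching lower and upper bounds, and then to deduce the numerical consequences from known facts about $\lambda(n)$. The central tool for the lower bound is Theorem~\ref{thm:unfairness}: for any signing vertex $v^* \in \Sigset(\protspec)$, the label sequences in $\seq(I, v^*)$ must cover all permutations of $\signers \setminus \set{\nlabel(v^*)}$ as subsequences. I would take a topological linearization of the vertices on paths in $\paths(I, v^*)$, producing a single sequence of labels in which every such label sequence appears as a subsequence. Extending this across all signers $P \in \signers$ (by choosing a signing vertex for each label and combining their path-vertex sets), I would argue that the union of vertices contributing to the signing phase must yield a supersequence of \emph{all} permutations of $\signers$, and hence contains at least $\lambda(n)$ distinct receive events, each requiring at least one incoming promise edge. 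The remaining $2n - 3$ edges would be attributed to the signature distribution: after accounting for overlaps between the chain of signature messages and the promise supersequence already counted, the irreducible surplus is $2n - 3$ message edges, obtained by a case analysis of which signers sign before receiving all signatures and which do so afterwards.

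For the upper bound, I would construct explicit DAG MPCS protocols attaining $\lambda(n) + 2n - 3$ messages by laying vertices along a shortest supersequence of $\signers$, inserting $\causal$-edges between consecutive same-role vertices, and attaching a minimal signature-distribution phase; fairness then follows from Theorem~\ref{thm:fairness-causal}. The numerical corollaries follow from the combinatorics of $\lambda$: for $3 \leq n \leq 7$ the exact values $\lambda(n) = n^2 - 2n + 4$ (classically attributed to Newey and verified computationally) yield complexity $n^2 + 1$; for $n \geq 10$, the known upper bound $\lambda(n) \leq n^2 - 2n + 3$ from the literature on shortest supersequences yields complexity at most $n^2$.

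The principal obstacle is the lower bound in the DAG setting. In the linear case treated in~\cite{MRT09}, a single execution path directly induces a supersequence and the bound falls out. In a DAG, multiple paths can share vertices, and the combinatorial bound must be shown to survive such sharing. The topological linearization step is intended to address the single-signer case, but combining the bounds across the $n$ signing vertices requires careful accounting to ensure that shared vertices and edges are not double-counted, and, most delicately, that no clever use of in-role parallelism can amortize the message count below the $\lambda(n) + 2n - 3$ threshold.
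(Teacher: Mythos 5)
Your upper bound and the numerical corollaries are fine (linear protocols are already DAG MPCS protocols, so the optimal linear constructions of~\cite{MRT09,KR12} attain $\lambda(n)+2n-3$ without any new construction, and the values of $\lambda(n)$ you cite give $n^2+1$ for $2<n<8$ and $\le n^2$ for $n\ge 10$). The gap is in the lower bound, and it is exactly the one you flag yourself: your plan charges $\lambda(n)$ message edges to receive events along permutation-covering paths and $2n-3$ to signature distribution, and then must show that shared vertices and edges across the paths to the $n$ signing vertices cannot be amortized. You never resolve this, and a per-signing-vertex decomposition with a case analysis on who signs before receiving all signatures is precisely the delicate route where double-counting objections bite; as stated, the claim that the $\lambda(n)$ receive events contribute $\lambda(n)$ \emph{distinct} message edges is unjustified in a DAG.

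The idea that closes the gap in the paper is a global reduction rather than a direct count. Take \emph{any} topological sort $(v_1,\ldots,v_k)$ of the whole vertex set and form the linear DAG with edges $(v_i,v_{i+1})$. Every directed path in the original DAG has its label sequence appear as a subsequence of $(\nlabel(v_1),\ldots,\nlabel(v_k))$, so by Theorem~\ref{thm:unfairness} the sorted label sequence still covers all the required permutations, i.e.\ it corresponds to a fair \emph{linear} protocol, whose message complexity is at least $\lambda(n)+2n-3$ by the known linear-case result. The only remaining point is that linearization does not increase the edge count, and this is where Lemma~\ref{l:singleComponent} enters: the DAG with the TTP vertex removed is a single connected component, hence has at least $k-1$ edges, which is the number of edges of the linear chain. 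So $MC_\protspec$ is bounded below by the message complexity of the linearized protocol, with no need to track which edges are shared between paths or to split the count into a promise phase and a signature phase. In short, the reduction to the already-proved linear lower bound via topological sort plus connectivity is the missing step; without it (or an equivalent argument defeating amortization), your lower bound does not go through.
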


\setlength{\intextsep}{2ex}

Note that while broadcasting protocols have a linear parallel complexity,
they have a cubic message complexity, since in each of the $n+1$
rounds each of the $n$ signers sends a message to every other signer.
Linear protocols on the other hand have quadratic minimal message
and parallel complexities. In the following we demonstrate that there
are DAG protocols which attain a linear parallel complexity while
maintaining a quadratic message complexity.

\subsection{Protocol constructions}\label{sec:constructions}

\paragraph{Single contractor, multiple subcontractors.}
A motivation for fair MPCS protocols given in~\cite{KR12} is a
scenario where a single entity, here referred to as a contractor,
would like to sign $k$ contracts with $k$ independent companies, in
the following referred to as subcontractors. The contractor has an
interest in either having all contracts signed or to not be bound by
any of the contracts. The subcontractors have no contractual
obligations towards each other. It would therefore be advantageous if
there is no need for the subcontractors to directly communicate with
each other.

The solutions proposed in~\cite{KR12} are
linear protocols. Their message and parallel complexities are thus
quadratic. Linear protocols can satisfy the requirement that
subcontractors do not directly communicate with each other only 
by greatly increasing the
message and parallel complexities. 

The protocol we propose here is a DAG, its
message complexity is $2(n+1)(n-1)$ 
and its parallel complexity is $2n+2$ for $n$ signers.
It therefore combines the low parallel complexity typically attained
by broadcasting protocols with the low message complexity of
linear protocols. 
Additionally, the protocol proposed does not require any direct
communication between subcontractors. 

\begin{figure*}[t]
\centering
\subfloat[
 \label{fig:contractor} A single contractor and three subcontractors.
]{
 \includegraphics[scale=0.39]{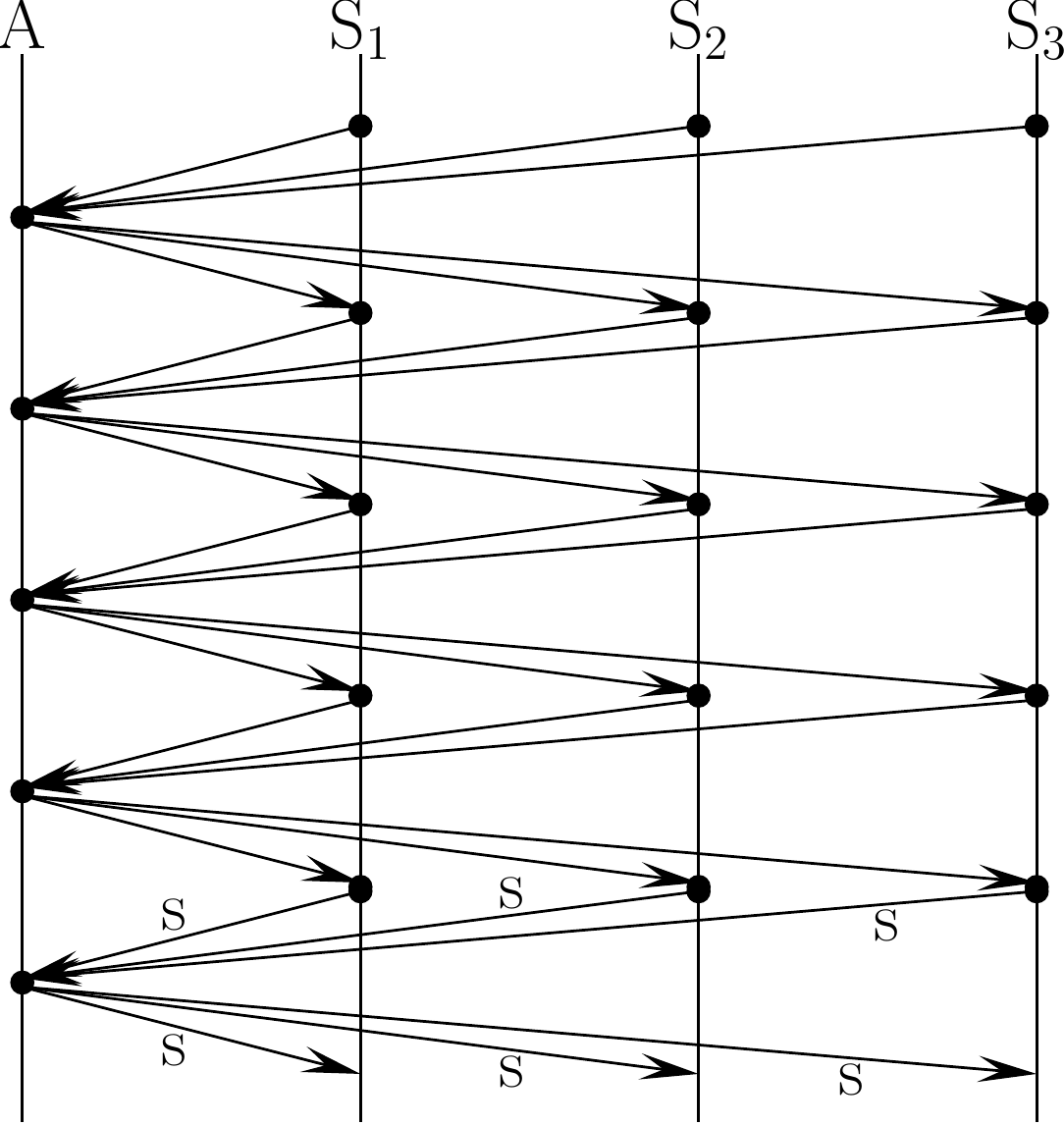}
}
\quad\quad\quad
\subfloat[
 \label{fig:two-twojoint}Two joint subcontractors.
]{
\includegraphics[scale=0.39]{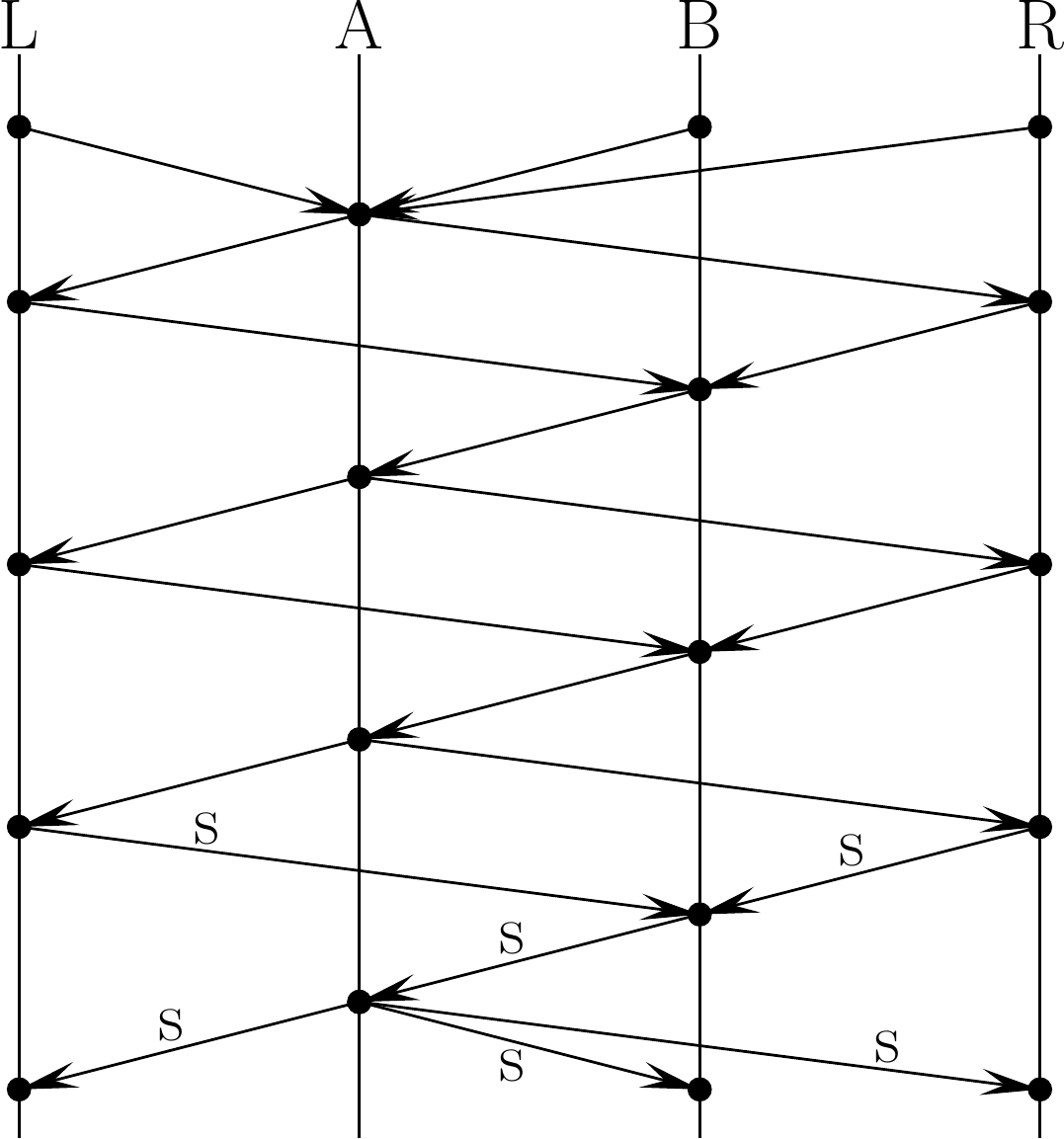}
}
\caption{Two examples of novel, fair DAG MPCS protocols.}
\end{figure*}

Figure~\ref{fig:contractor} shows a single contractor with three
subcontractors. The protocol can be subdivided into five rounds, one
round consisting of the subcontractors sending a message to the
contractor followed by the contractor sending a message to the subcontractors.
In the first four rounds promises are sent, in the final round
signatures are sent.
The protocol can be easily generalized to more than three
subcontractors. For every subcontractor added, one extra round of
promises needs to be included in the protocol specification.

The protocol is fair by
Theorem~\ref{thm:fairness-causal}. 
The MSC shown in Figure~\ref{fig:contractor}
resembles the skeletal graph from which it was built. The message
contents can be derived by 
computing the full graph according to
Condition~\ref{cond:transitivity} of
Definition~\ref{def:parallelMPCSspec}. The result is as follows. 
In each round of the protocol, each of the subcontractors sends to
the contractor a promise for the contractor and for each of the other
subcontractors. The contractor then sends to each of the
subcontractors all of the promises 
received and his own promise.
The final round is performed in the same manner, except that promises
are replaced by signatures.
 
\paragraph{Two contractors with joint subcontractors.}
Figure~\ref{fig:two-twojoint} shows a protocol where two contractors
want to sign a contract involving two subcontractors. The
subcontractors are independent of each other. 

After the initial step, where all signers send a promise to the first
contractor $A$, there are three protocol rounds, one
round consisting of the contractor $A$ sending promises to the two
subcontractors $L$ and $R$ in parallel which in turn send promises to
the second contractor $B$. 
A new round is started with the second contractor sending
the promises received with his own promise to contractor $A$.

This protocol, too, can be generalized to several independent
subcontractors. For every subcontractor added, one extra protocol round 
needs to be included in the protocol specification and each protocol
step of the subcontractors executed analogously.

\begin{figure}[!ht]
 \centering
 \includegraphics[scale=.4]{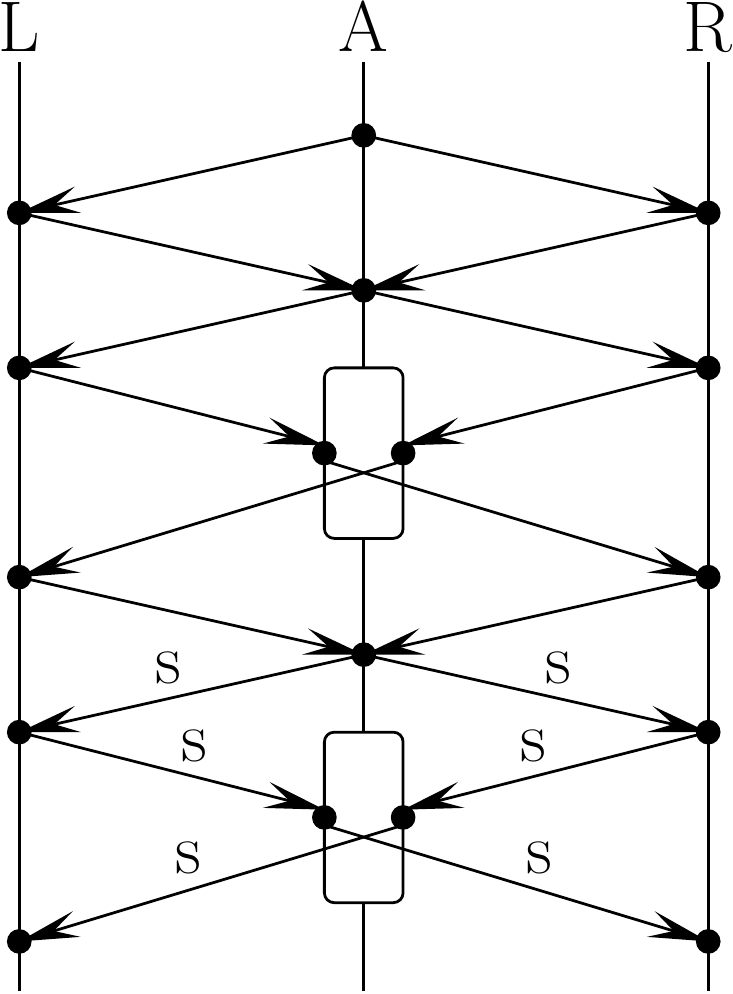}
 \caption{\label{fig:subthree-crossed} In-role parallelism.}
\end{figure}

\paragraph{Parallelism within a role.}
Figure~\ref{fig:subthree-crossed} shows an example of a subcontracting
protocol with in-role parallelism for the contractor role. The
contractor initiates the protocol. In the indicated parallel phase,
the contractor may immediately forward a promise by one of the
subcontractors along with his own promise to the other subcontractor
without waiting for the latter subcontractor's promise. The same is
true in the signing phase. The fairness property for this protocol has
been verified with a tool (described in Appendix~\ref{s:tool}) 
which tested fairness for each signer in all
possible executions. 

\section{Conclusion}

We have identified fair subcontracting as a challenging new problem
in the area of multi-party contract signing. We have made first steps
towards solving this problem by introducing DAG
MPCS protocols and extending existing fairness results from linear
protocols to DAG protocols.
For three typical subcontracting configurations 
we propose novel DAG MPCS protocols that perform well in terms
of message complexity and parallel complexity. 
Fairness of our
protocol schemes follows directly from our theoretical results
and we have verified it for concrete protocols with our
automatic tool.

There are a number of open research questions related to fair
subcontracting that we haven't addressed. We
mention two. 
The first concerns the
implementation of multi-contracts. In our current approach we consider
a single abstract contract shared by all parties. However, in practice
such a contract may consist of a number of subcontracts that are
accessible to the relevant signers only. How to cryptographically
construct such contracts and what information these
contracts should 
share is not evident.
Second, a step needs to be made towards putting our results into
practice. Given the application domains identified in this paper, we
must identify the relevant signing scenarios and topical boundary
conditions in order to develop dedicated protocols for each
application area.

\section*{Acknowledgement}
We thank Barbara Kordy for her many helpful comments on
this paper.

\appendix

\section{DAG MPCS Verification Tool}\label{s:tool}

We have developed a prototype tool in Python 2 that model checks 
a skeletal protocol graph for the fairness property (Definition~\ref{def:fair-execution})
in the execution model defined in Section~\ref{sec:spec_exec_model}.
The tool, along with specifications for the protocols presented
in this paper, is available at \url{http://people.inf.ethz.ch/rsasa/mpcs}.

The tool's verification procedure works directly on the
execution model and the TTP decision procedure (Algorithm~\ref{algo:resolve}). 
It therefore provides evidence for the correctness of the
protocols shown in Section~\ref{sec:protocols}, independent of 
the fairness proofs given in this paper.

The verification is performed as follows. For each specified signer,
the tool analyzes a set of executions in which the signer is honest
and all other signers dishonest. 
The tool does not analyze all possible executions. It starts the
analysis from the state where all
promises of dishonest signers have been sent, but no protocol step has
been performed by the honest signer. 
By analyzing this type of executions only, we do not miss any
attacks, because the honest signers' fairness is not invalidated
until he has sent a signature.
In this reduced set of executions, the dishonest signers retain the
possibility to contact the TTP from any of their vertices and all
these possibilities are explored by the tool.

We note that the same type of verification could be achieved with an
off-the-shelf model checker and we would expect better performance in
such a case. However, the code complexity and room for
error when encoding a given protocol and TTP decision procedure in a
model checker's input language is comparable to the code complexity of
this self-contained tool.

\section{Technical Details and Proofs}

\subsection{Technical Details}

\subsubsection{Parallelism within a role}

The MPCS protocols designed in this work allow for parallelism during
the execution of the protocol.
The specification language allows even for parallel threads to occur
within a signer role. 
This allows us to model the case where a signer role represents 
multiple branches of the same entity. A
signature issued by any branch represents the signature of the entire entity.
We expect that the signing processes across branches are not easily synchronizeable with each other. 
Such parallelism can be implemented in multiple ways. 
We discuss the various options and explain the choices made for this paper.

The first decision to be made is whether parallel threads of a
signer role should be assumed to have shared knowledge. 
In this paper, we choose the weaker assumption: memory for a signer's
parallel threads is local to the threads. 
This is in accordance with our expectation that parallel-threads are not easily synchronizeable
and allows us, for instance, to specify and analyze protocols in which 
representatives of a signing entity can
independently carry out parallel protocol steps without the need to
communicate and synchronize their combined knowledge. 
Causal dependence
between two actions of a signer is explicitly indicated in the
protocol specification.

This design decision leads to three options for handling protocol
failures.
\begin{enumerate}
\item All threads of a signer immediately synchronize and stop
executing whenever any of the threads intends to issue a resolve
request to the TTP. A designated resolution process per signer will be
required to continuously schedule all threads and take care of the
interaction with the TTP.
\item Threads of a signer contact the signer's designated resolution
process only when they intend to issue a resolve request. The
resolution process will take of contacting the TTP (only once per
signer) and distributing the TTP's reply upon request of the threads.
\item Threads of a signer are considered fully independent. A signer's
threads are not orchestrated. The TTP may take into account that two
requests can originate from the same signer, but from different
(causally not related) threads.
\end{enumerate}

In this paper we adopt the second option, which keeps the middle
between the fully synchronized and fully desynchronized model. This
will on the one hand allow for independent parallel execution of the
threads and on the other hand minimize the impact of the signer's
threading on the TTP's logic. From an abstract point of view, one
could even argue that the second and third option are equivalent if we
consider the signer's designated resolution processes just as part of
a distributed TTP.
We assume that the communication between a thread and the designated
resolution process is resilient.

\subsubsection{The class of DAG MPCS protocols}

The class of DAG MPCS protocols defined in Section~\ref{sec:MPCS} is 
restricted by condition~\ref{cond:transitivity} of
Definition~\ref{def:parallelMPCSspec}. It
requires that every signer $\signer$ sends a message to all
subsequent, causally following signers occurring before signer $\signer$'s
next step. 
While there are fair DAG MPCS protocols which do not belong to this
restricted class, such protocols are not going to have a lower
communication complexity. The reason for this is that each message
received by a signer serves as evidence for the TTP that the sender has executed
the protocol up to a certain step. Skipping such a message thus
lengthens the protocol, because the evidence is available only at a
later vertex. 

Furthermore, the restriction simplifies the reasoning about
fairness in that causal precedence $v\prec w$ between vertices $v, w$ is enough to
guarantee that there is a message sent from signer $\nlabel(v)$ to
signer $\nlabel(w)$ at some point between the execution of $v$ and the
execution of $w$. 
Finally, it also permits one to design, characterize, and 
represent protocols using skeletal graphs rather than
full graphs as displayed in Figure~\ref{fig:exampleDAG}.

\subsection{Proofs}

 The set
$\maxset(S) = \{v\in S\mid\forall w\in\vertices:
v\prec w\Rightarrow w\not\in S\}$ is the set of vertices in $S$ which
do not have any causally following vertices in $S$ and we will refer
to it as the set of \emph{maximal vertices} of $S$. 
Similarly, 
$\minset(S) = \{v\in S\mid\forall w\in\vertices:
w\prec v\Rightarrow w\not\in S\}$ 
is the set of vertices in $S$ which do not have
 any causally preceding vertices in $S$ and will be referred to as the 
set of \emph{minimal vertices} of $S$.

\noindent\textbf{Theorem.}
\emph{If there exists an AC sequence for a DAG MPCS protocol, then the
protocol is not fair.}

\begin{proof}
Conditions~\ref{PACcond:forcedAbort}
through~\ref{PACcondition4} imply that the TTP decision procedure
leads to an \abort{} token for the last signer to contact the TTP. The remaining two conditions imply that 
the last signer has sent a signature, but not received a
signature. 

To complete the proof, we need to construct an execution in which the
exit transitions occur in the order indicated by the AC sequence and 
signer $\nlabel(v_l)$ is honest.
Let $(v_1,\ldots,v_l\mid v)$ be an AC sequence.
For each vertex $v_i$ in the AC sequence, let $\overline{V_i}$ be the
causal closure of $\set{v,v_i}$ in $\vertices\cup\edges$. 
Note that the union of causally closed sets is causally closed. 
Let $\rho_i$ be the sequence of transitions
$\transition{
\bigcup_{j<i}\overline{V_j}}{\alpha}\dots\transition{}{\alpha'}\bigcup_{j\leq
 i}\overline{V_j}$ without exit transitions and such
that all states are causally closed. 

For $1\leq i\leq l$ and $\rho_i=[\state_0,\alpha_1,\ldots,\state_k]$, 
let $\rho_i'=[\state_0\cup\{(v_{1},v_\ttp),\dots, (v_{i-1},v_\ttp)\},$
$\alpha_1, \ldots, \state_k\cup\{(v_{1},v_\ttp),\dots,(v_{i-1},v_\ttp)\},\exit]$.
That is, $\rho_i'$ is equal to $\rho_i$, except for an additional exit
transition 
$\transition{\state}{\exit}{\state\cup\set{(v_i,v_\ttp)}}$ and
additional exit edges in all states which stem from exit transitions added to 
$\rho_{1}', \dots,\rho_{i-1}'$. 
Finally, for $\rho_l'=[\state_0,\alpha_1,\dots,\state_k,\exit]$, 
let 
$\rho_l'' =
[\state_0\setminus\set{v_l},\alpha_1,\dots,\state_k\setminus\set{v_l},\exit,
s_k\cup\set{(v_l,v_\ttp)}\setminus\set{v_l}]$. 

Then $\rho=\rho_1'\cdots\rho_{l-1}'\cdot\rho_l''$ is an execution in
which signer $\nlabel(v_l)$ is honest, since the restricted execution
is by construction causally closed in all states before the last state
and the single exit transition occurs in the last transition.

Unfairness for $\nlabel(v_l)$ follows since $\nlabel(v_l)$ has sent a
signature at $v$, not received all signatures from the other signers and received an $\abort$ from the TTP. 
\end{proof}

\begin{proof}[of Lemma \ref{lem:snd-count}]
Let $\rho=[\state_0,\alpha_1,\state_1,\dots,\alpha_l,\state_l]$ be an execution 
of $\protspec$. It is sufficient to show that if $\rho$
is closed, it contains all send events exactly once.
According to Definition~\ref{def:transition}, we know that for every 
$i\in\{0,\dots,l-1 \}$ we have 
$\transition{\state_i}{\alpha_{i+1}}{\state_{i+1}}\implies \state_i\not=\state_{i+1}$.
This implies that, in any execution, each step of the protocol 
(in particular every send event) can be executed at most once. 
Furthermore, if $\rho$ is closed, the third condition from 
Definition~\ref{def:exe-closed} implies that,
every send event has already occurred in $\rho$. Otherwise, there 
exists $e\in\edges$ such that $\elabel(e)=\snd$
and $\rho$ can be extended 
to $\rho\cdot[\snd,s_l\cup\{e\}]\in\exep$, 
which contradicts the closedness of $\rho$.
\end{proof}

\begin{lemma}\label{l:commonAncestorPerm}
 Let $\protspec$ be an optimistic fair DAG MPCS protocol. 
 Let $v, v', v''$ be pairwise distinct vertices assigned to the same signer 
 such that
\begin{enumerate}
 \setlength{\itemsep}{1pt}
 \setlength{\parskip}{0pt}
 \setlength{\parsep}{0pt}
\item $v\in\Sigset(\protspec)\setminus\Endset(\protspec)$, 
\item $v''$ is a maximal common ancestor of $v$ and $v'$, i.e., 
 $v''\prec w \prec v, v' \Rightarrow \nlabel(w)\neq\nlabel(v)$, and

\item for every signer $\signer\neq \nlabel(v)$ there exists a vertex
 $w\succ v''$ with $\nlabel(w)=\signer$.
\end{enumerate}
Then
for every permutation $(\signer_1,\ldots,\signer_{k-1})$ of
signers in $\signers\setminus\set{\nlabel(v)}$, there exists a sequence in 
$\seq(I,v'')$ which contains $(\signer_1,\ldots,\signer_{k-1})$ as a (not
necessarily consecutive) subsequence.
\end{lemma}

\begin{proof} 
Suppose there exists a permutation 
$(\signer_1,\ldots,\signer_{k-1})$ of signers in
$\signers\setminus\set{\nlabel(v)}$ which is not a subsequence of any
sequence in $\seq(I,v'')$. 
We construct an AC sequence as follows.
Let $V_1$ be the set of all vertices of $\signer_1$ in $I$.
For $i>1$, let $V_i$ be the minset of all vertices of $\signer_i$
which causally follow a vertex of $V_{i-1}$, i.e.
$V_i=\minset(\set{w\in\vertices\mid \nlabel(w)=\signer_i\wedge
\exists w'\in V_{i-1}: w'\prec w})$. 
Since for every signer there exists a vertex which causally follows
$v''$, it follows that for some $j$, there exists a vertex $v_j\in V_j$
with $v''\prec v_j$. (Else we have contradiction to the assumption that 
$(\signer_1,\ldots,\signer_{k-1})$ is a missing permutation in $\seq(I,v'')$.)
Thus we obtain a sequence $(v_1,\ldots,v_j,w|v)$, where $v''\prec
w\preceq v'$, $\nlabel(w)=\nlabel(v)$, 
which is an AC sequence. 
\end{proof}

\begin{proof}[of Theorem~\ref{thm:unfairness}]
It suffices to verify the statement for a subset of all vertices in
$\Sigset(\protspec)$ by the following two facts:
\textbf{Fact 1:} Let $v\in\Sigset(\protspec)$ be a causally earliest vertex of a signer from which a signature is
sent, i.e. $\forall w\in\Sigset(\protspec): w\prec v\Rightarrow \nlabel(w)\neq\nlabel(v)$.
If $\seq(I,v)$ contains all permutations of signers in
$\signers\setminus\set{\nlabel(v)}$, then 
$\seq(I,w)$ contains all such permutations of signers for all $w\succ v$
with $\nlabel(w)=\nlabel(v)$.
\textbf{Fact 2:} If $v\in\Sigset(\protspec)$ such that for every signer
$\signer\in\signers\setminus\set{\nlabel(v)}$ there is a vertex
$w\prec v$ for which $\seq(I,w)$ contains all permutations of signers
in $\signers\setminus\set{\nlabel(w)}$, then $\seq(I,v)$ contains all
permutations of signers in $\signers\setminus\set{\nlabel(v)}$.

Thus, we may assume that $v\in\Sigset(\protspec)$ is a causally earliest vertex
of a signer from which a signature is sent (by Fact 1) and that
$v\not\in\Sigset(\protspec)\setminus\Endset(\protspec)$ (by Fact 2).

Since $v$ is a causally earliest vertex of a signer from which a
signature is sent, it follows by the fact that the protocol is
optimistic that for every signer other than $\nlabel(v)$
there exists a vertex which causally
follows $v$ or that there exists another vertex $v''$ of signer
$\nlabel(v)$ from which a signature is sent such that
$v''\npreceq v$ and $v\npreceq v''$. 
We consider these two cases separately.

\begin{enumerate}
\item For every signer other than $\nlabel(v)$, there exists a vertex
 which causally follows $v$.

We split this case into two separate subcases depending on whether
there exists a vertex $v'$ of signer $\nlabel(v)$ which causally follows
$v$.
\begin{enumerate}
\item $\exists v'\succ v: \nlabel(v')=\nlabel(v)$.
Let $(\signer_1,\ldots,\signer_{k-1})$ be a permutation of signers in
$\signers\setminus\set{\nlabel(v)}$ and 
suppose towards a contradiction that the permutation 
does not appear as a subsequence of any sequence in
$\seq(I,v)$. We construct an AC sequence as follows.
Let $V_1$ be the set of all vertices of $\signer_1$ in $I$.
For $i>1$, let $V_i$ be the minset of all vertices of $\signer_i$
which causally follow a vertex of $V_{i-1}$, i.e.
$V_i=\minset(\{w\in\vertices\mid \nlabel(w)=\signer_i\wedge
\exists w'\in V_{i-1}: w'\prec w\})$. 

Since for every signer there exists a vertex which causally follows
$v$, it follows that for some $j$ there exists a vertex $v_j\in V_j$
with $v\prec v_j$, else we have contradiction to the assumption that 
the permutation $(\signer_1,\ldots,\signer_{k-1})$ is not a
subsequence of any sequence in $\seq(I,v)$.

By construction, there exists a vertex in $V_{j-1}$ which causally
precedes $v_j$ and thus we obtain a sequence $(v_1,\ldots,v_j,v'|v)$
which is an AC sequence. 
\item $\neg\exists v'\succ v: \nlabel(v')=\nlabel(v)$.

Since the protocol is optimistic, there exists a vertex assigned to
signer $\nlabel(v)$ such that 
$v'\in\Endset(\protspec)$. Since $v\not\in\Endset(\protspec)$, it
follows that $v'$ is not causally related to $v$. By the remark
preceding Lemma~\ref{l:commonAncestorPerm}, 
there exists a common ancestor $v''$ or $v$
and $v'$ and $v, v', v''$ satisfy the hypothesis of the Lemma.
Thus
there exists a vertex $w$ causally
preceding $v$ such that 
$\seq(I,w)$ contains all permutations of signers in
$\signers\setminus\set{\nlabel(v)}$ and therefore $\seq(I,v)$ contains
all such permutations. 
\end{enumerate}
\item There are causally unrelated vertices of signer $\nlabel(v)$
 from which signatures are sent.

Let $v'\neq v$ be such a vertex.
By Equation~\eqref{condition1} in Section~\ref{sec:suffnec}, 
there is a vertex $w$ assigned to signer $\nlabel(v)$ 
which causally precedes all vertices of $\nlabel(v)$ which are in
$\Sigset(\protspec)$. 
Let $v''$ be a maximal such vertex, i.e. for any vertex $w'$ assigned to signer
$\nlabel(v)$, there exists a vertex in $\Sigset(\protspec)$ of signer
$\nlabel(v)$ which does not causally follow $v''$.

Since the protocol is optimistic, for every signer $\signer$ in the protocol,
there exists a vertex $w''$, $\nlabel(w'')=\signer$ which causally
follows $v''$. 

Then the vertices $v, v', v''$ satisfy the hypothesis of 
Lemma~\ref{l:commonAncestorPerm}, thus
there exists a vertex $w$ causally
preceding $v$ such that 
$\seq(I,w)$ contains all permutations of signers in
$\signers\setminus\set{\nlabel(v)}$ and therefore $\seq(I,v)$ contains
all such permutations. 
\end{enumerate}
\end{proof}

\begin{proof}[of Theorem~\ref{thm:fairness-causal}]
 Suppose that the protocol is not fair. Consider a shortest AC sequence,
 $(v_1,\ldots,v_l|\_)$, $\nlabel(v_l)=\nlabel(v)$. Since the sequence is
 a shortest sequence, we have that $v_2\not\in\Initset(\protspec)$, else
 $(v_2,\ldots,v_l|\_)$ would be a shorter AC sequence. Consider the
 permutation of signers $(\signer_1,\ldots,\signer_l)$ corresponding
 to the AC sequence, i.e. $\signer_i = \nlabel(v_i)$. 

Let $w_l$ be the unique vertex in 
$$\minset(\set{w\in\Sigset(\protspec)\mid \nlabel(w)=\nlabel(v_l)}).$$ Existence
of a vertex in the set 
follows from the fact that the protocol is optimistic, uniqueness
follows from the fact that there is no in-role parallelism,
i.e.\ the vertices assigned to a particular signer are totally ordered.
By hypothesis, the set of paths from $I$
to $w_l$ contains all permutations of signers
$\signers\setminus\set{\nlabel(v)}$. Let $u_1,\ldots,u_l$ be the
vertices associated with one such permutation. Note that either
$u_1\in I$ or we can find $u_1'\in I$, $u_1'\prec u_1$ and
$\nlabel(u_1')=u_1$. Thus we may assume $u_1\in I$. 
We have $u_l\preceq w_l\prec v_l$. 
We also have $w_l\prec v_{l-1}\prec v_l$, else
condition~\ref{PACcondition4} for $(v_1,\ldots,v_l|\_)$ being an AC sequence (Definition~\ref{def:PACseq}) would be violated.

Thus we have
$u_l\preceq w_l\prec v_{l-1}\prec v_l$. 
This forms the basis for
the inductively constructed sequence $w_1,\ldots, w_{l}$:
Given $w_{i+1},\ldots,w_l$, satisfying $u_{i+1}\preceq w_{i+1}\prec v_i\prec v_{i+1}$, 
let $w_i$ be the unique vertex in 
$\maxset(\set{w\prec w_{i+1}\mid
 \nlabel(w)=\nlabel(v_i)})$. Existence of a vertex in the set follows from
$u_i\prec u_{i+1}\preceq w_{i+1}$ and 
 uniqueness follows from the lack of 
in-role parallelism. By construction, $u_{i}\preceq w_{i}\prec
v_{i}$. If $i>1$, then we also have 
$u_{i}\preceq w_{i}\prec v_{i-1}\prec v_{i}$, else
condition~\ref{PACcondition4} for $(v_1,\ldots,v_l|\_)$ being an AC sequence (Definition~\ref{def:PACseq}) would be violated.

Thus, we have constructed a sequence $w_1,\ldots,w_l$ satisfying
$u_1\preceq w_1\prec w_2\prec v_1\prec v_2$. This is not possible, since 
$\nlabel(u_1)=\nlabel(v_1)$ and $u_1\prec v_1\in\Initset(\protspec)$,
contradicting $u_1\in I$.
\end{proof}

\begin{lemma}\label{l:singleComponent}
Let $G=(\vertices,\edges)$ be the DAG of a fair optimistic DAG
MPCS protocol for two or more signers. Let $G'=(\vertices',\edges')$, where
$\vertices'=\vertices\setminus\set{v_\ttp}$ and
$\edges'=\edges\setminus\set{(v,w)\in E\mid v=v_\ttp\vee w=v_\ttp}$,
be the DAG obtained by removing the TTP vertex and corresponding
edges. 
Then $G'$ is a single connected component.
\end{lemma}

\begin{proof}
Suppose there are more than one connected components in $G'$. Let
$v\in\Sigset(\protspec)$ be a causally earliest vertex from which a signature is
sent, i.e. $\forall w\in\vertices: w\prec v\Rightarrow w\not\in\Sigset(\protspec)$.

Let $w$ be a vertex in the $\Initset(\protspec)$ of a different connected component than $v$.
We have two cases:
\begin{itemize}
\item $\nlabel(w)=\nlabel(v)$. Then $(w|v)$ is an AC sequence. 
\item $\nlabel(w)\neq\nlabel(v)$. Let $v'$ be a vertex such that
 $\nlabel(v)=\nlabel(v')$ and $v'\npreceq v$. Such a vertex exists,
 because the protocol is optimistic, thus there must be a vertex of
 signer $\nlabel(v)$ receiving a signature. But such a vertex cannot
 precede $v$, because
 $v$ is a causally earliest vertex from which a signature is sent.

 Consider two cases:
 \begin{itemize}
 \item $w\nprec v'$: Then $(w,v'|v)$ is an AC sequence. 
 \item $w\prec v'$: Then $w$ and $v'$ are in the same connected
 component and $v$ is in another connected component. If
 $v'\not\in\Initset(\protspec)$, then let $v''\prec v'$ be
 such that $\nlabel(v')=\nlabel(v'')$ and
 $v''\in\Initset(\protspec)$. Else let $v''=v'$.

 Then $(v''|v)$ is an AC sequence. 
 \end{itemize}
\end{itemize}
\end{proof}

\begin{proof}[of Theorem \ref{thm:minimal_message_complexity}]
 The minimal message complexity has been derived for optimistic fair
 linear protocols
 in~\cite{MRT09,KR12}. 
 Since these protocols are a subset of DAG MPCS protocols
 we see that the same
 message complexity can be attained. 
We need to show that there are no optimistic DAG MPCS protocols
with lower message complexity. 
 By Theorem~\ref{thm:unfairness}, every permutation of signers in the
 protocol must occur as a subsequence in the set of paths from a
 maximal vertex of the set of vertices of a signer
 in the initial set to a vertex in the signing set. 

 Consider any fair optimistic DAG MPCS protocol $\protspec=(V,E,\nlabel,\elabel,\ttpfun)$.
 Construct a linear DAG $(V,E')$ by choosing any topologically sorted
 list $(v_1,\ldots,v_k)$ of the vertices in $(V,E)$ and 
 setting $E'=\set{(v_i,v_{i+1}) | 1\leq i\leq k}$.
 Since all permutations of signers occur
 along the paths in the DAG $(V,E)$ under the labelling
 $\nlabel:\vertices\to\signers$, they also occur in the topologically 
 sorted list $(v_1,\ldots,v_k)$ under the same labelling. Since the 
 DAG is a single connected component by Lemma~\ref{l:singleComponent},
 the number of edges in $E'$ is smaller or equal to the number of
 edges in $E$. Thus the message complexity of $\protspec$ is greater
 than or equal
 to the message complexity of a protocol based on the linear DAG $(V,E')$.

 The specific numbers for message complexity follow from~\cite{KR12,R12}. 
\end{proof}

\end{document}